\documentclass{amsart}
\pdfoutput=1
\usepackage{amssymb}
\usepackage{hyperref}
\usepackage{todonotes}

\newtheorem{Lem}{Lemma}
\newtheorem{Thm}{Theorem}


\begin{document}


\title%
[]%
{Counting thin subgraphs via packings\\faster than meet-in-the-middle time}
\author%
[]%
{Andreas Bj\"orklund and Petteri Kaski and \L ukasz Kowalik}

\begin{abstract}
Vassilevska and Williams~(STOC 2009) showed how to count simple paths on $k$ vertices and matchings on $k/2$ edges in an $n$-vertex graph in time $n^{k/2+O(1)}$. In the same year, two different algorithms with the same runtime were given by Koutis and Williams~(ICALP 2009), and Bj\"orklund \emph{et al.}~(ESA 2009), via $n^{st/2+O(1)}$-time algorithms for counting $t$-tuples of pairwise disjoint sets drawn from a given family of $s$-sized subsets of an $n$-element universe.
Shortly afterwards, Alon and Gutner~(TALG 2010) showed that these problems have $\Omega(n^{\lfloor st/2\rfloor})$ and $\Omega(n^{\lfloor k/2\rfloor})$ lower bounds when counting by color coding. 

Here we show that one can do better, namely, we show that the ``meet-in-the-middle'' exponent $st/2$ can be beaten and give an algorithm that counts in time $n^{0.45470382 st + O(1)}$ for $t$ a multiple of three. This implies algorithms for counting occurrences of a fixed subgraph on $k$ vertices and pathwidth $p\ll k$ in an $n$-vertex graph in $n^{0.45470382k+2p+O(1)}$ time, improving on the three mentioned algorithms for paths and matchings, and circumventing the color-coding lower bound. 
We also give improved bounds for counting $t$-tuples of disjoint $s$-sets for $s=2,3,4$. 

Our algorithms use fast matrix multiplication. We show an argument that this is necessary to go below the meet-in-the-middle barrier.
\end{abstract}

\maketitle


\section{Introduction}

Suppose we want to count the number of occurrences of a $k$-element pattern 
in an $n$-element universe. This setting is encountered, for example,
when $P$ is a $k$-vertex pattern graph, $H$ is an $n$-vertex host graph, and 
we want to count the number of subgraphs that are isomorphic to $P$ in $H$.
If $k$ is a constant independent of $n$, enumerating all the $k$-element 
subsets or tuples of the $n$-element universe can be done in time $O(n^k)$, 
which presents a trivial upper bound for counting small patterns. 

In this paper we are interested in patterns that are {\em thin}, such 
as pattern graphs that are paths or cycles, or more generally pattern graphs 
with {\em bounded pathwidth}. Characteristic to such patterns is 
that they can be split into two or more parts, such that the interface 
between the parts is easy to control. For example, a simple path on $k$ 
vertices can be split into two paths of half the length that have exactly 
one vertex in common; alternatively, one may split the path into 
two independent sets of vertices.

The possibility to split into two controllable parts immediately suggests 
that one should pursue an algorithm that runs in no worse 
time than $n^{k/2+O(1)}$; such an algorithm was indeed 
discovered in 2009 by Vassilevska and Williams~\cite{VW09} for 
counting $k$-vertex subgraphs that admit an independent set of size $k/2$.
This result was accompanied, within the same year, of two publications 
presenting the same runtime restricted to counting paths and matchings. Koutis 
and Williams~\cite{KW09} and Bj\"orklund \emph{et al.}~\cite{BHKK09} 
describe different algorithms for the related problem of counting 
the number of $t$-tuples of disjoint sets that can be formed from
a given family of $s$-subsets of an $n$-element universe in 
$n^{st/2+O(1)}$ time. Fomin~\emph{et al.}~\cite{FLRRS12} 
generalized the latter result into an algorithm that counts occurrences 
of a $k$-vertex pattern graph with pathwidth $p$ in $n^{k/2+2p+O(1)}$ time.

Splitting into three parts enables faster listing of the parts in 
$n^{k/3+O(1)}$ time, but requires more elaborate control at the interface 
between parts. This strategy enables one to count also dense subgraphs
such as $k$-cliques via an algorithm of Ne\v{s}et\v{r}il and 
Poljak \cite{NP85} (see also \cite{EG04,KKM00}) that uses fast
matrix multiplication to achieve a pairwise join of the three parts, 
resulting in running time $n^{\omega k/3+O(1)}$, where 
$2\leq\omega< 2.3728639$ is the limiting exponent of square matrix 
multiplication \cite{LG14,VW12}.
Even in the case $\omega=2$ this running time is, however, 
$n^{2k/3+O(1)}$, which is inferior to ``meeting in the middle'' by 
splitting into two parts.

But is meet-in-the-middle really the best one can do?
For many problems it appears indeed that the worst-case running time given 
by meet-in-the-middle is difficult to beat. Among the most notorious examples 
in this regard is the Subset Sum problem, for which the 1974 
meet-in-the-middle algorithm of Horowitz and Sahni~\cite{HS74} remains to 
date the uncontested champion. Related problems such as the $k$-Sum problem 
have an equally frustrating status, in fact to such an extent that the 
case $k=3$ is regularly used as a source of hardness reductions in 
computational geometry~\cite{GO12}.

Against this background one could perhaps expect a barrier 
at the meet-in-the-middle time $n^{k/2+O(1)}$ for counting
thin subgraphs, and such a position would not be without some supporting 
evidence. Indeed, not only are the algorithms of Vassilevska and 
Williams~\cite{VW09}, Koutis and Williams~\cite{KW09}, and Bj\"orklund \emph{et al.}~\cite{BHKK09} fairly recent discoveries, but they all employ 
rather different techniques. Common to all three algorithms is however 
the need to consider the $k/2$-element subsets of the $n$-element vertex set, 
resulting in time $n^{k/2+O(1)}$. Yet further evidence towards a barrier 
was obtained by Alon and Gutner~\cite{AG09} who showed that 
color-coding based counting approaches relying on 
a perfectly $k$-balanced family of hash functions face 
an unconditional $c(k)n^{\lfloor k/2\rfloor}$ lower bound for the size 
of such a family. From a structural complexity perspective 
Flum and Grohe~\cite{FG04} have shown that counting $k$-paths is \#W[1]-hard 
with respect to the parameter $k$, and a very recent breakthrough of 
Curticapean~\cite{C13} establishes a similar barrier to counting $k$-matchings.
This means that parameterized counting algorithms with running time $f(k)n^{O(1)}$ 
for a function $f(k)$ independent of $n$ are unlikely for these problems,
even if such a structural complexity approach does not pinpoint precise
lower bounds of the form $n^{\Omega(g(k))}$ for some function $g(k)$. 

Contrary to the partial evidence above, however, our objective
in this paper is to show that 
{\em there is a crack in the meet-in-the-middle barrier}, 
albeit a modest one. In particular, we show that it is possible 
to count subgraphs on $k$ vertices such as paths and matchings---and more 
generally any $k$-vertex subgraphs with pathwidth $p$---within 
time $n^{0.45470382k+2p+O(1)}$ for $p\ll k$.

Our strategy is to reduce the counting problem to the task of 
evaluating a particular trilinear form on weighted hypergraphs, and 
then show that this trilinear form admits an evaluation algorithm that breaks 
the meet-in-the-middle barrier. This latter algorithm is our main contribution,
which we now proceed to present in more detail.

\subsection{Weighted disjoint triples}

Let $U$ be an $n$-element set. For a nonnegative integer $q$, let us write $\binom{U}{q}$ for the set of all $q$-element subsets of $U$. 
Let $f,g,h:\binom{U}{q}\rightarrow \mathbb{Z}$ be three functions given 
as input. We are interested in computing the trilinear form
\begin{equation}
\label{eq:main}
\Delta(f,g,h)=\!\!\!\!\!\!\!\!\!\sum_{\substack{A,B,C\in\binom{U}{q}\\A\cap B=A\cap C=B\cap C=\emptyset}}\!\!\!\!\!\!\!\!\!
f(A)g(B)h(C)\,.
\end{equation}
To ease the running time analysis, we make two assumptions. 
First, $q$ is a constant independent of $n$. 
Second, we assume that the values of the functions $f,g,h$ are 
bounded in bit-length by a polynomial in $n$, which will be the setup 
in our applications (Theorems~\ref{thm:subgraph}~and~\ref{thm:packings}).

Let us write $\omega$ for the limiting exponent of square matrix 
multiplication, $2\leq \omega<2.3728639$ \cite{LG14,VW12}.
Similarly, let us write $\alpha$ for the limiting 
exponent such that multiplying an $N\times N^\alpha$ matrix with 
an $N^\alpha\times N$ matrix takes $N^{2+o(1)}$ arithmetic operations, 
$0.3<\alpha\leq 3-\omega$~\cite{LG12}. 

The next theorem is our main result; here the intuition is that we
take $k=3q$ in our applications, implying that we break the 
meet-in-the-middle exponent $k/2$.

\begin{Thm}[Fast weighted disjoint triples]
\label{thm:main}
There exists an algorithm that evaluates $\Delta(f,g,h)$ 
in time $O\bigl(n^{3q(\frac{1}{2}-\tau)+c}\bigl)$
for constants $c$ and $\tau$ independent of the constant $q$,
with $c\geq 0$ and 
\begin{equation}
\label{eq:tau}
\tau=
\begin{cases}
\frac{(3-\omega)(1-\alpha)}{36-6(1+\omega)(1+\alpha)} & \text{if $\alpha\leq 1/2$};\\
\frac{1}{18} & \text{if $\alpha\geq 1/2$.}
\end{cases}
\end{equation}
\end{Thm}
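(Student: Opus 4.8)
The plan is to reduce the computation of $\Delta(f,g,h)$ to a collection of rectangular matrix multiplications, and to balance the block sizes so that the matrix-multiplication savings beat the trivial meet-in-the-middle cost. First I would fix a balanced partition of the universe $U$ into roughly $3q$ equal blocks (or, more flexibly, a parameter $m$ controlling how the universe is split), and classify each of $A$, $B$, $C$ according to how its $q$ elements distribute among the blocks. Summing over these ``block profiles'' costs only a constant factor (since $q$ is constant), so it suffices to evaluate, for each fixed triple of profiles, the partial sum $\sum f(A)g(B)h(C)$ over $A,B,C$ realizing those profiles and satisfying $A\cap B=A\cap C=B\cap C=\emptyset$. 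The disjointness constraint is what has to be handled; the standard meet-in-the-middle device would be to guess $A$ outright ($n^q$ choices) and then multiply two ``halves'' of $\binom{U}{q}$ together, giving $n^q\cdot n^q = n^{2q}$ rather than the claimed $n^{3q(1/2-\tau)}=n^{3q/2-3q\tau}$ — so naive guessing is too expensive and the whole point is to avoid it.

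The key step is to encode the three-way disjointness as a rectangular matrix product. I would split the coordinates of $U$ (i.e.\ the element-slots) into three groups of sizes governed by a parameter, say sizes $\beta q$, $\beta q$, $(1-2\beta)q$ in appropriate proportion, and build matrices indexed on one axis by ``short'' projections of the sets and on the other axis by ``long'' projections, with entries carrying the relevant $f,g,h$-values and an indicator of partial disjointness, so that the $(i,j)$ entry of the product matrix accumulates exactly the terms consistent on the shared coordinates. Chaining two such products realizes all three pairwise-disjointness conditions $A\cap B=\emptyset$, $A\cap C=\emptyset$, $B\cap C=\emptyset$. The dimensions of these matrices are of the form $n^{x}\times n^{y}$, and the cost is $n^{x+y+o(1)}$ when $y/x \le \alpha$ (so the product is near-linear in the larger dimension) or otherwise governed by $\omega$. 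Here is where the parameter $\alpha$ (multiplying an $N\times N^{\alpha}$ matrix by an $N^{\alpha}\times N$ matrix in $N^{2+o(1)}$ operations) enters: setting the short side to be the $\alpha$-power of the long side keeps each multiplication at the square-matrix exponent $2$ on the dominant dimension, which is exactly the source of the savings over $n^{2q}$.

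Concretely, I expect the exponent to come out as $3q/2 - 3q\tau$ after optimizing a small one- or two-variable program in the splitting parameter $\beta$ (the fraction of each set's coordinates placed on the ``guessed''/short axis). In the regime $\alpha\ge 1/2$ the optimization is unconstrained by the $\alpha$-threshold and yields the clean value $\tau=1/18$ (so exponent $3q/2 - q/6 = 4q/3$, matching the $n^{\omega k/3}$ intuition at $\omega=2$ with $k=3q$); in the regime $\alpha<1/2$ one is forced to respect $y\le\alpha x$, and carrying the constraint through the optimization produces the stated ratio $\tau=(3-\omega)(1-\alpha)/\bigl(36-6(1+\omega)(1+\alpha)\bigr)$. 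The additive $+c$ in the exponent absorbs the cost of a single ring operation and bookkeeping over the constantly many block profiles. The main obstacle, I anticipate, is the combinatorial design of the matrices so that the product entry is \emph{exactly} the desired partial sum with no overcounting of the disjointness pattern — in particular, ensuring that each triple $(A,B,C)$ is counted once and only once across the chained products, which requires a careful choice of which coordinates are ``shared'' between the two multiplications and an inclusion–exclusion or direct-sum argument to patch together the constraints without double counting. Verifying the running-time bound is then a routine, if slightly tedious, optimization once the matrix dimensions are pinned down.
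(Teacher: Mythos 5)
There is a genuine gap, and it sits exactly at the point you defer: the ``combinatorial design of the matrices'' so that chained rectangular products enforce all three disjointness conditions. No such design is given, and there is a structural reason to doubt one exists in the form you describe. To enforce $B\cap C=\emptyset$ in a product whose outer indices are partial projections, the outer indices must in effect determine $B$ and $C$ (each of size $q$), which pushes the outer dimensions toward $n^{q}$ and the cost toward $n^{2q}$ even when $\omega=2$ --- and $n^{2q}$ is \emph{worse} than the meet-in-the-middle bound $n^{3q/2}$ you need to beat, not better. Your benchmarks suggest this confusion: you call $n^{2q}$ the meet-in-the-middle cost (it is $n^{3q/2}$, via halves of the union), and your sanity check ``exponent $4q/3$ matches $n^{\omega k/3}$ at $\omega=2$, $k=3q$'' is off, since $\omega k/3=2q\neq 4q/3$. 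The block-profile device does not help with this: sets with compatible profiles can still intersect arbitrarily inside blocks, so the disjointness constraint is not localized by profiles, and if you strengthen the device to a balanced-splitting (color-coding-like) scheme you run into precisely the $\Omega(n^{\lfloor k/2\rfloor})$ obstruction of Alon and Gutner that the theorem is designed to circumvent. Finally, the claimed formula for $\tau$ is asserted to ``come out'' of an unspecified optimization; nothing in the proposal produces the specific dependence on $\alpha$ and $\omega$.

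The paper's proof uses matrix multiplication in an essentially opposite way: the pairwise disjoint triples are \emph{never} computed by a matrix product. One introduces indeterminates $x_j=\sum_{|A\oplus B\oplus C|=j}f(A)g(B)h(C)$, so that $\Delta(f,g,h)=x_{3q}$, and builds a linear system from two families of equations. The first (parity-sieve) family has Vandermonde coefficients $(n-2j)^i$ and right-hand sides $y_i$ computable in time $O(n^{(3/2-\gamma)q+c})$ via parity/intersection transforms; the second family computes $x_j$ \emph{directly} only for small $j$ (i.e., triples with large overlap), where the symmetric-difference product $(f\oplus g)(D)$ is a product of an $n^{\ell/2}\times n^{q-\ell/2}$ matrix with its transpose shape --- here the overlap $A\cap B$ is the inner dimension, so rectangular multiplication is cheap precisely because the triples are far from disjoint. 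Discarding the most expensive sieve equations and solving the (invertible, Vandermonde-based) system for $x_{3q}$, with $\gamma$ balancing $(3/2-\gamma)q$ against $\omega q/2+(2-\alpha\beta-\beta)\gamma q$ and $(1+2\gamma)q$, yields the stated $\tau$. This sieve-plus-linear-system mechanism is the missing key idea in your proposal.
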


\noindent
{\em Remark 1.} 
For $\omega=2.3728639$ and $\alpha=0.30$ we obtain $\tau=0.045296182$ and hence
$O\bigl(n^{3q\cdot 0.45470382+c}\bigr)$ time.
For $\alpha\geq 1/2$ we obtain $\tau=0.055555556$ and hence
$O\bigl(n^{3q\cdot 0.44444445+c}\bigr)$ time. Note that the latter case
occurs in the case $\omega=2$ because then $\alpha=1$.

\medskip
\noindent
{\em Remark 2.} 
We observe that the trilinear form \eqref{eq:main} admits an evaluation 
algorithm analogous to the algorithm of 
Ne\v{s}et\v{r}il and Poljak \cite{NP85} discussed above. 
Indeed, \eqref{eq:main} can be split into a multiplication of two
$n^{q}\times n^{q}$ square matrices, which gives running time
$O(n^{\omega q+c})$. Even in the case $\omega=2$ the running
time $O(n^{2q+c})$ is however inferior to Theorem~\ref{thm:main}.

\medskip
\noindent
{\em Remark 3.} 
Theorem~\ref{thm:main} can be stated in an alternative form that
counts the number of arithmetic operations (addition, subtraction, 
multiplication, and exact division of integers) performed by the 
algorithm on the inputs $f,g,h$ to obtain $\Delta(f,g,h)$. 
This form is obtained by simply removing the constant $c$ from 
the bound in Theorem~\ref{thm:main}. 

Finally, we show that one can improve upon Theorem~\ref{thm:main}
via case by case analysis. Here our intent is to pursue only the 
cases $q=2,3,4$ and leave the task of generalizing from here to
further work. 

When considering specific values of $q$, it is convenient to 
measure efficiency using the number of arithmetic operations 
(addition, subtraction, multiplication, and exact division of 
integers) performed by an algorithm. 

\begin{Thm}
\label{thm:q234}
There exist algorithms that solve the weighted disjoint triples 
problem 
\begin{enumerate}
\item
for $q=2$ in $O(n^{\omega})$ arithmetic operations,
\item
for $q=3$ in $O(n^{\omega+1})$ arithmetic operations, and
\item
for $q=4$ in $O(n^{2\omega})$ arithmetic operations.
\end{enumerate}
\end{Thm}

{\em Remark.} In the case $\omega=2$ we observe that the three
algorithms in Theorem~\ref{thm:q234} all run in $O(n^q)$ 
arithmetic operations, which is linear in the size of the input.

\subsection{Counting thin subgraphs and packings}

Once Theorem~\ref{thm:main} is available, the following theorem is an 
almost immediate corollary of techniques for counting injective 
homomorphisms of bounded-pathwidth graphs developed by 
Fomin~{\em et al.}~\cite{FLRRS12} 
(see also \S3 in Amini~{\em et al.}~\cite{AFS12}).
In what follows $\tau$ is the constant in \eqref{eq:tau}. 

\begin{Thm}[Fast counting of thin subgraphs]
\label{thm:subgraph}
Let $P$ be a fixed pattern graph with $k$ vertices and pathwidth $p$. 
Then, there exists an algorithm that takes as input an $n$-vertex
host graph $H$ and counts the number of subgraphs of $H$ that are 
isomorphic to $P$ in time 
$O\bigl(n^{(\frac{1}{2}-\tau)k+2p+c}+n^{k/3+3p+c}\bigr)$
where $c\geq 0$ is a constant independent of the constants $k,p,\tau$.
\end{Thm}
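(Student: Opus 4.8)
The plan is to reduce the counting of $P$-subgraphs in $H$ to a bounded number of evaluations of the trilinear form $\Delta(f,g,h)$ from Theorem~\ref{thm:main}, following the bounded-pathwidth injective-homomorphism machinery of Fomin \emph{et al.}~\cite{FLRRS12}. First I would recall the standard reduction from counting subgraphs isomorphic to $P$ to counting injective homomorphisms from $P$ to $H$: since $P$ is fixed, $|\mathrm{Aut}(P)|$ is a constant, and the number of $P$-subgraphs equals the number of injective homomorphisms divided by $|\mathrm{Aut}(P)|$. So it suffices to count injective homomorphisms $\phi:V(P)\to V(H)$ in the stated time.

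Next I would fix a path decomposition of $P$ of width $p$ and use it to split $V(P)$ into three consecutive ``chunks'' $P_1,P_2,P_3$ along the decomposition, chosen so that each chunk spans roughly $k/3$ vertices of $P$ and the two separators $S_{12}=V(P_1)\cap V(P_2)$-type interfaces between consecutive chunks each have size at most $p+1$. The key combinatorial point is that a homomorphism of $P$ decomposes into homomorphisms of $P_1,P_2,P_3$ that agree on the separators; by inclusion--exclusion over \emph{which} images coincide (i.e.\ Bonferroni/Möbius inversion over the lattice of identifications of the $k$ pattern vertices) one reduces counting \emph{injective} homomorphisms to counting plain homomorphisms of $P$ and of its $2^{O(k)}=O(1)$ quotient graphs. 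For each such quotient, one precomputes, by dynamic programming along the path decomposition restricted to one chunk, the functions
\[
f(A)=\#\{\text{homs of }P_1\text{ whose image is }A\subseteq V(H),\ |A|=q_1\},
\]
and analogously $g,h$ for $P_2,P_3$, where $q_i=|V(P_i)|$; this DP runs in time $n^{k/3+O(p)}$ since a separator carries at most $p+1$ boundary vertices whose images (at most $n^{p+1}$ choices) index the DP table, and each chunk has about $k/3$ vertices.

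Then the number of homomorphisms of $P$ whose images of $P_1,P_2,P_3$ are pairwise disjoint is exactly $\Delta(f,g,h)$ with universe $U=V(H)$ and $q=\max q_i$ (padding the smaller chunks with dummy universe elements so all three arguments live in $\binom{U}{q}$, absorbing only an $n^{O(p)}$ overhead since $|q_i-k/3|=O(p)$); handling non-disjoint images is again folded into the quotient-graph inclusion--exclusion. Invoking Theorem~\ref{thm:main} bounds each evaluation by $O(n^{3q(\frac12-\tau)+c})=O(n^{(\frac12-\tau)k+O(p)+c})$, and the DP precomputation contributes the $O(n^{k/3+O(p)+c})$ term; summing over the $O(1)$ quotients and dividing by $|\mathrm{Aut}(P)|$ gives the claimed $O\bigl(n^{(\frac12-\tau)k+2p+c}+n^{k/3+3p+c}\bigr)$ bound after checking that the separator and padding overheads are genuinely $2p$ and $3p$ in the exponents respectively.

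I expect the main obstacle to be purely bookkeeping rather than conceptual: getting the exponent of $n$ attached to the pathwidth to come out as exactly $2p$ in the first term and $3p$ in the second, which requires a careful choice of where to cut the path decomposition (so that \emph{two} separators, each of size $\le p+1$, are exposed simultaneously in the $\Delta$-evaluation, contributing $2(p+1)$, while the chunk DP exposes one separator of size $p+1$ plus the $O(p)$ slack from unequal chunk sizes and the padding). This is where one must lean on the precise accounting in~\cite{FLRRS12}; since the present paper only claims the result ``follows almost immediately'' from those techniques, I would state the reduction carefully, point to~\cite{FLRRS12} (and \S3 of~\cite{AFS12}) for the quotient/inclusion--exclusion details, and verify only the arithmetic of the exponents.
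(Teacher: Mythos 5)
Your high-level plan (three chunks along a path decomposition, a $\Delta$-evaluation via Theorem~\ref{thm:main}, division by $|\mathrm{Aut}(P)|$) matches the paper, but the step that actually makes the reduction work is missing, and the substitute you propose does not do the job. A homomorphism of $P$ is recovered from homomorphisms of the three chunks only if they agree \emph{pointwise} on the separator vertices; pairwise disjointness of the images, which is all that $\Delta(f,g,h)$ can see, cannot encode this agreement, and your inclusion--exclusion ``over which images coincide'' is a statement about images as sets, not about the maps agreeing on shared vertices. The paper resolves this by an outer enumeration over all $O(n^{2p})$ choices $\varphi$ of the images of the two separators $S,T$ (using the five-part split $L,S,M,T,R$ with $|S|,|T|\le p$), and by defining, \emph{for each fixed} $\varphi$, functions $f_\varphi,g_\varphi,h_\varphi$ on subsets $A,B,C\subseteq V(H)\setminus(\varphi(S)\cup\varphi(T))$ counting injective homomorphisms of $P[L\cup S]$, $P[S\cup M\cup T]$, $P[T\cup R]$ that extend $\varphi$; then $\sum_\varphi\Delta(f_\varphi,g_\varphi,h_\varphi)$ counts injective homomorphisms of $P$, and the factor $n^{2p}$ in the first term of the running time is exactly this enumeration. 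Without it your accounting of ``where the $2p$ comes from'' is not an oversight of bookkeeping but an unfilled conceptual hole.

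A second genuine problem is your reduction of injective homomorphisms to plain homomorphisms by M\"obius inversion over quotients of $P$: the quotient graphs $P/\theta$ can have pathwidth much larger than $p$, so the dynamic programming along the (no longer available) width-$p$ decomposition does not run in $n^{O(p)}$ time, and the claimed exponents $2p$ and $3p$ are lost. The paper avoids pattern-side quotients entirely: injectivity \emph{within} each part is obtained from Lemma 2 of Fomin~{\em et al.}~\cite{FLRRS12} (host-side computations built on the $O(n^{p+c})$-time homomorphism-counting algorithm of D{\'\i}az~{\em et al.}~\cite{DST02}, applied to the induced subgraphs $H[\varphi(S)\cup B\cup\varphi(T)]$, which is the $O(n^{k/3+3p+c})$ bottleneck), while injectivity \emph{across} parts is exactly what the disjointness in $\Delta$ together with the restriction $A,B,C\subseteq V(H)\setminus(\varphi(S)\cup\varphi(T))$ enforces. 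If you replace your quotient argument by this per-$\varphi$ construction, the exponent arithmetic you outline does go through.
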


\noindent
{\em Remark.} The running time in Theorem~\ref{thm:subgraph} 
simplifies to $O\bigl(n^{(\frac{1}{2}-\tau)k+2p+c}\bigr)$ if $p\leq k/9$.

\medskip
Theorem~\ref{thm:main} gives also an immediate speedup for counting set 
packings. In this case we use standard dynamic programming to count, 
for each $q$-subset $A$ with $q=st/3$, the number of $t/3$-tuples of 
pairwise disjoint $s$-subsets whose union is $A$. 
We then use Theorem~\ref{thm:main} to 
assemble the number of $t$-tuples of pairwise disjoint $s$-subsets
from triples of such $q$-subsets. This results in the following corollary.

\begin{Thm}[Fast counting of set packings]
\label{thm:packings}
There exists an algorithm that takes as input 
a family $\mathcal{F}$ of $s$-element subsets of an $n$-element 
set and an integer $t$ that is divisible by $3$, 
and counts the number of $t$-tuples of pairwise disjoint subsets
from $\mathcal{F}$ in time $O\bigl(n^{(\frac{1}{2}-\tau)st+c}\bigr)$
where $c\geq 0$ is a constant independent of the constants $s,t,\tau$.
\end{Thm}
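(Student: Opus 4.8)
The plan is to reduce the counting of $t$-tuples of pairwise disjoint sets to a single evaluation of the trilinear form $\Delta$ from Theorem~\ref{thm:main}, taking $q = st/3$. Since $t$ is divisible by $3$, write $t = 3m$ with $m = t/3$, so $q = sm$. First I would set up, via standard subset-sum-style dynamic programming over the family $\mathcal{F}$, a function that records for each $A \in \binom{U}{q}$ the number $N(A)$ of ordered $m$-tuples $(S_1,\dots,S_m)$ of pairwise disjoint members of $\mathcal{F}$ with $S_1 \cup \dots \cup S_m = A$. This table is computed by iterating over $\mathcal{F}$ and, for each set encountered, updating a table indexed by subsets of $U$ of size at most $q$; the number of relevant subsets is $O(n^q)$ and each ring operation costs $O(n^c)$, so the whole preprocessing runs in time $O(|\mathcal{F}| \cdot n^{q+c}) = O(n^{q + c'})$ for a constant $c'$ independent of $s,t$ (absorbing the polynomial factor $|\mathcal{F}| \le n^s$ into the exponent constant is harmless since $s$ is a constant here in the same sense $q$ is).

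Next I would invoke Theorem~\ref{thm:main} with $f = g = h = N$. By construction,
\begin{equation*}
\Delta(N,N,N) = \sum_{\substack{A,B,C \in \binom{U}{q} \\ A \cap B = A \cap C = B \cap C = \emptyset}} N(A)N(B)N(C)
\end{equation*}
counts exactly the ordered triples of pairwise disjoint $q$-sets $(A,B,C)$, each weighted by the number of ways to refine it into an ordered $m$-tuple within each block; since disjointness of $A,B,C$ together with internal disjointness within each block is equivalent to pairwise disjointness of all $3m = t$ chosen sets, $\Delta(N,N,N)$ equals the number of ordered $t$-tuples $(S_1,\dots,S_t)$ of pairwise disjoint members of $\mathcal{F}$. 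The running time of this step is $O(n^{3q(\frac12 - \tau) + c}) = O(n^{(\frac12 - \tau)st + c})$, which dominates the preprocessing, giving the claimed bound.

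The only genuine subtlety — and the step I would be most careful about — is the bookkeeping that the partition of a $t$-tuple into a triple of $m$-tuples is a bijection: every ordered $t$-tuple of pairwise disjoint sets is obtained from exactly one triple $(A,B,C)$ (namely the unions of its first, second, and third blocks of $m$ sets) and exactly one pair of internal orderings counted by $N(A),N(B),N(C)$ respectively, so no inclusion–exclusion correction is needed. I would also note the routine point that the ring $R$ can be taken to be the integers with values bounded in bit-length by a polynomial in $n$ (the counts are at most $|\mathcal{F}|^t \le n^{st}$), so the hypothesis of Theorem~\ref{thm:main} on the $O(n^c)$ cost of ring operations is met, and that $s$ and $t$ play here the role of the fixed constant $q$ in that theorem, so all hidden constants are independent of $s,t,\tau$ as claimed. $\qed$
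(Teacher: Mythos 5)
Your proposal is correct and is essentially the paper's own proof: the paper likewise uses standard dynamic programming to compute, for each $q$-subset $A$ with $q=st/3$, the number of $t/3$-tuples of pairwise disjoint members of $\mathcal{F}$ whose union is $A$, and then invokes Theorem~\ref{thm:main} on this function (as $f=g=h$) to assemble the number of $t$-tuples. One small quantitative point: your stated preprocessing bound $O(|\mathcal{F}|\cdot n^{q+c})\leq O(n^{q+s+c})$ is not always dominated by $n^{(\frac{1}{2}-\tau)st+c}$ with $c$ independent of $s,t$ (e.g.\ $t=3$ and $s$ large), but this is only slack in the analysis, since each set of $\mathcal{F}$ updates only the $O(n^{q-s})$ compatible table entries, so the dynamic program runs in $O(n^{q+c})$ time and is indeed dominated by the evaluation of $\Delta$.
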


\subsection{On the hardness of counting in disjoint parts}

We present two results that provide partial
justification why there was an apparent barrier 
at ``meet-in-the-middle time'' for counting in disjoint parts. 

First, in the case of two disjoint parts, the problem appears to
contain no algebraic dependency that one could expoit towards
faster algorithms beyond those already presented in 
Bj\"orklund {\em et al.}~\cite{BHKK08,BHKK09}. Indeed, we can
provide some support towards this intuition by showing that
the associated 2-tensor has full rank over the rationals, see Lemma~\ref{lem:disjmat}.
This observation is most likely not new but we were unable to find the right reference.

Second, recall that our algorithms mentioned in the previous section use fast matrix multiplication. 
We show an argument that this is necessary to go below the meet-in-the-middle barrier.
More precisely, we show that any {\em trilinear algorithm} (cf.~\cite[\S9]{Pan1984})
for $\Delta(f,g,h)$ whose rank over the integers is below the 
meet-in-the-middle barrier implies a sub-cubic algorithm for matrix 
multiplication:

\begin{Thm}
\label{thm:omega-tau}
Suppose that for all constants $q$ there exists
a trilinear algorithm for $\Delta(f,g,h)$ with 
rank $r=O(n^{3q(1/2-\tau)+c})$ over the integers,
where $\tau>0$ and $c\geq 0$ are constants 
independent of $n$ and $q$. 
Then, $\omega\leq 3-\tau$.
\end{Thm}

\subsection{Overview of techniques and discussion}

The main idea underlying Theorem~\ref{thm:main} is to design
a system of linear equations whose solution contains the weighted 
disjoint triples \eqref{eq:main} as one indeterminate. The main obstacle
to such a design is of course that we must be able to construct 
and solve the system within the allocated time budget.

In our case the design will essentially be a balance between
two families of linear equations, the {\em basic} (first) family and
the {\em cheap} (second) family, for the same indeterminates. The basic 
equations alone suffice to solve 
the system in meet-in-the-middle time $O(n^{3q/2+c})$, whereas
the cheap equations solve directly for selected indeterminates 
{\em other than} \eqref{eq:main}. The virtue of the cheap equations 
is that their right-hand sides can be evaluated efficiently using 
fast (rectangular) matrix multiplication, which enables us to throw 
away the most expensive of the basic equations and still have sufficient 
equations to solve for \eqref{eq:main}, thereby breaking the 
meet-in-the-middle barrier. Alternatively one can view the 
extra indeterminates and linear equations as a tool to expand
the scope of our techniques beyond the extent of the apparent 
barrier so that it can be circumvented.

Before we proceed to outline the design in more detail, let us observe
that the general ingredients outlined above, namely fast matrix multiplication
and linear equations, are well-known techniques employed in a number of
earlier studies. In particular in the context of subgraph counting 
such techniques can be traced back at least to the triangle- and 
cycle-counting algorithms of Itai and Rodeh \cite{IR78}, with more recent 
uses including the algorithms of Kowaluk, Lingas, and Lundell~\cite{KLL}
that improve upon algorithms of 
Ne\v{s}et\v{r}il and Poljak \cite{NP85}
and 
Vassilevska and Williams \cite{VW09} 
for counting small dense subgraphs 
($k<10$) with a maximum independent set of size $2$. 
Also the counting-in-halves technique of 
Bj\"orklund {\em et al.}~\cite{BHKK09} can be seen to solve an 
(implicit) system of linear equations to recover weighted disjoint 
packings.

Let us now proceed to more detailed design considerations.
Here the main task is to relax \eqref{eq:main} 
into a collection of trilinear forms related by linear constraints.
A natural framework for relaxation is to parameterize the triples 
$(A,B,C)$ so that the pairwise disjoint triples required by \eqref{eq:main} 
become an extremal case. 

A first attempt at such parameterization is to parameterize the triples 
$(A,B,C)$ by the size of the union $|A\cup B\cup C|=j$. In particular,
the triple $(A,B,C)$ is pairwise disjoint if and only if $j=3q$.
With this parameterization we obtain $2q+1$ indeterminates, one for each 
value of $q\leq j\leq 3q$. In this case {\em inclusion-sieving} 
(trimmed M\"obius inversion \cite{BHKK09,BHKK10}) 
on the subset lattice $(2^{[n]},\cup)$ enables a system of linear
equations on the indeterminates. This is in fact the approach underlying 
the counting-in-halves technique of Bj\"orklund {\em et al.}~\cite{BHKK09},
which generalizes also to M\"obius algebras of lattices with 
the set union (set intersection) replaced by the join (meet) operation of 
the lattice~\cite{BHKKNP12}. Unfortunately, it appears difficult to break 
the meet-in-the-middle barrier via this parameterization, in particular 
due to an apparent difficulty of arriving at a cheap system of equations 
to complement the basic equations arising from the inclusion sieve.

A second attempt at parameterization is to replace the set union $X\cup Y$ 
with the {\em symmetric difference} 
$X\oplus Y=(X\setminus Y)\cup(Y\setminus X)$ and parameterize
the triples $(A,B,C)$ by the size of the symmetric difference 
$|A\oplus B\oplus C|=j$. The set $A\oplus B\oplus C$ is illustrated
in Fig.~\ref{fig:type}.

\begin{figure}[ht]
\begin{center}
\def\svgwidth{4cm} 
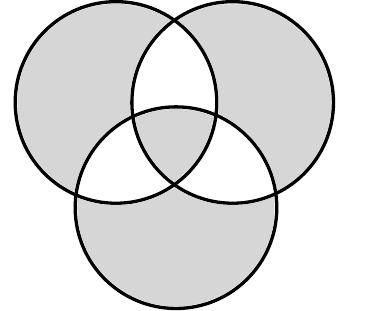 
\end{center}
\vspace*{-4mm}
\caption{The set $A\oplus B\oplus C$.}
\label{fig:type}
\end{figure}

Recalling that $X\cup Y$ and $X\oplus Y$ coincide if and only if $X$ and $Y$ are disjoint, we again 
recover the pairwise disjoint triples as the extremal case $j=3q$. 
With this parameterization we obtain $\lfloor 3q/2\rfloor+1$ indeterminates,
one for each $0\leq j\leq 3q$ such that $j\equiv q\pmod 2$.
In this case {\em parity-sieving} 
(trimmed ``parity-of-intersection transforms'', 
see~\S\ref{sect:parity-trans}) on the group algebra of the 
elementary Abelian group $(2^{[n]},\oplus)$ enables a system of linear 
equations on the indeterminates. While this second parameterization via 
symmetric difference is {\em a priori} less natural than the 
first parameterization via set union, it turns out to be more successful 
in breaking the meet-in-the-middle barrier. In particular the basic 
equations (Lemma~\ref{lem:first}) on the $\lfloor 3q/2\rfloor+1$ indeterminates
alone suffice to obtain an algorithm with running time $O(n^{3q/2+c})$,
which is precisely at the meet-in-the-middle barrier.
The key insight then to break the barrier is that the indeterminates with 
{\em small} values of $j$ can be solved directly (Lemma~\ref{lem:second}) via 
fast rectangular matrix multiplication. In particular this is because small 
$j$ implies large overlap between the sets $A,B,C$ and a 
``triangle-like'' structure that is amenable to matrix multiplication 
techniques. That is, from the perspective of the symmetric difference 
$D=A\oplus B$, it suffices to control the differences $A\setminus B$ and 
$B\setminus A$ (outer dimensions in matrix multiplication), whereas the 
overlap $A\cap B$ (inner dimension) is free to range across sets disjoint 
from $D$ (see~\S\ref{sect:sdp}). 

A further design constraint is that the basic equations (Lemma~\ref{lem:first})
must be mutually independent of the cheap equations (Lemma~\ref{lem:second})
to enable balancing the running time (see~\S\ref{sect:runtime}) 
while retaining invertibility of the system; here we have opted for an 
analytically convenient design where the basic equations are in general 
position (the coefficient matrix is a Vandermonde matrix) that enables easy 
combination with the cheap equations, even though this design may not be 
the most efficient possible from a computational perspective. 

From an efficiency perspective we can in fact do better than 
Theorem~\ref{thm:main} for small values of $q$ by proceeding 
via case by case analysis (see~\S\ref{sect:q234}). 
We show that faster algorithms exist for at least $q=2,3,4$ 
(Theorem~\ref{thm:q234}). 

An open problem that thus remains is whether the upper bound 
$n^{3q(\frac{1}{2}-\tau)+O(1)}$ in Theorem~\ref{thm:main}
can be improved to the asymptotic form 
$\binom{n}{3q(\frac{1}{2}-\delta)}n^{O(1)}$ for some constant $\delta>0$
independent of $0\leq q\leq n/3$. In particular, such an improvement would
parallel the asymptotic running time $\binom{n}{k/2}n^{O(1)}$ of the
counting-in-halves technique~\cite{BHKK09}. Furthermore, such an
improvement would be of considerable interest since it would, for example, 
lead to faster algorithms for computing the permanent of an 
integer matrix. Unfortunately, this also suggests that such an improvement
is unlikely, or at least difficult to obtain given the relatively 
modest progress in improved algorithms for the permament 
problem~\cite{Bjorklund12}. Some further evidence towards
the subtlety of counting in disjoint parts is that we can show
(Theorem~\ref{thm:omega-tau}) that to break the ``meet-in-the-middle'' 
barrier for the weighted disjoint triples problem with a trilinear 
algorithm, it is in fact {\em necessary} to use fast matrix multiplication 
(see~\S\ref{sect:lower-bounds}). Put otherwise, the proofs of
Theorems~\ref{thm:main} and \ref{thm:omega-tau} reveal that 
for constant $q$ the structural tensors for 
weighted disjoint triples and matrix multiplication are
loosely rank-equivalent in terms of existence of low-rank decompositions.

\subsection{Organization}

The proof of Theorem~\ref{thm:main} is split into two parts.
First, in \S\ref{sect:system} we derive a linear system whose
solution contains $\Delta(f,g,h)$. Then, in \S\ref{sect:solving}
we derive an algorithm that constructs and solves the system within the claimed
running time bound. We then proceed with the two highlighted
applications of Theorem~\ref{thm:main}: 
in \S\ref{sect:homomorphims-in-three-parts}
we give a proof of Theorem~\ref{thm:subgraph} by relying on techniques
of Fomin~{\em et al.}~\cite{FLRRS12}, and 
in \S\ref{sect:packings-in-three-parts} we prove Theorem~\ref{thm:packings}.
We conclude the paper in \S\ref{sect:lower-bounds} by connecting
fast trilinear algorithms for $\Delta(f,g,h)$ to fast matrix multiplication.

\section{The linear system}
\label{sect:system}

We now proceed to derive a linear system whose solution contains
$\Delta(f,g,h)$. Towards this end it is convenient to start by
recalling some elementary properties of the symmetric difference 
operator on sets.

For sets $X,Y\subseteq U$, let us write 
$X\oplus Y=(X\setminus Y)\cup(Y\setminus X)$ for the symmetric 
difference of $X$ and $Y$. We immediately observe that 
\begin{equation}
\label{eq:xor-size}
|X\oplus Y|=|X|+|Y|-2|X\cap Y|
\end{equation}
and hence
\begin{equation}
\label{eq:xor-size-equiv-sum}
|X\oplus Y|\equiv |X|+|Y|\pmod 2\,.
\end{equation}
In particular, for any $A,B,C\in\binom{U}{q}$ we have 
\[
|A\oplus B\oplus C|\equiv |A|+|B|+|C|=3q\equiv q\pmod 2\,.
\]
Thus, the size $|A\oplus B\oplus C|$ is always even if $q$ is even 
and always odd if $q$ is odd. In both cases $j=|A\oplus B\oplus C|$
may assume exactly $e=\lfloor 3q/2\rfloor+1$ values in
\begin{equation}
\label{eq:jq}
J_q=\{j\in\{0,1,\ldots,3q\}:j\equiv q\!\!\!\pmod 2\}\,.
\end{equation}

We are now ready to define the $e\times e$ linear system. 
We start with the indeterminates of the system.

\subsection{The indeterminates}
For each $j\in J_q$, let 
\begin{equation}
\label{eq:xj}
x_j=x_j(f,g,h)=
\!\!\!\sum_{\substack{A,B,C\in\binom{U}{q}\\|A\oplus B\oplus C|=j}}\!\!\!
f(A)g(B)h(C)\,.
\end{equation}
In particular, since $A,B,C\in\binom{U}{q}$ are pairwise 
disjoint if and only if $|A\oplus B\oplus C|=3q$, 
we observe that $\Delta(f,g,h)=x_{3q}$. 
Thus, it suffices to solve for the indeterminate $x_{3q}$ 
to recover \eqref{eq:main}. We proceed to formulate a linear
system towards this end. The system is based on two families
of equations. The first family will contribute $d$ equations,
and the second family will contribute $e-d$ equations. 

\subsection{A first family of equations}
Our first family of equations is based on a parity construction.
For now we will be content in simply defining the equations
and providing an illustration in Fig.~\ref{fig:triple-intersect}.
(The eventual algorithmic serendipity of this construction will be
revealed only later in \eqref{eq:xor-cap} and \eqref{eq:tpfgh}.)
Let $i=0,1,\ldots,d-1$ be an index for the equations,
let $p\in\{0,1\}$ denote parity, and let $s=0,1,\ldots,i$.
For all $Z\in\binom{U}{s}$ let
\begin{equation}
\label{eq:tp}
T_p(Z)=
\!\!\!\!\!\!\!\!\!\!\!\!\sum_{\substack{A,B,C\in\binom{U}{q}\\|(A\oplus B\oplus C)\cap Z|\equiv p\!\!\!\pmod 2}}\!\!\!\!\!\!\!\!\!\!\!\!
f(A)g(B)h(C)\,.
\end{equation}

\begin{figure}[ht]
\begin{center}
\def\svgwidth{6cm} 
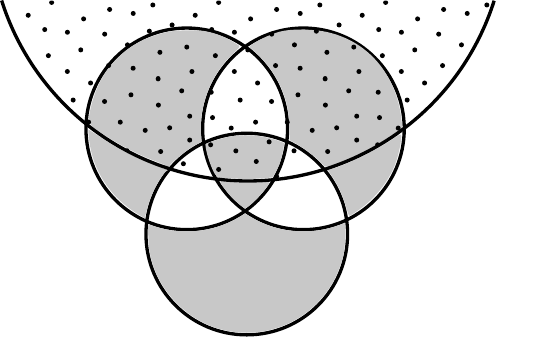 
\end{center} 
\vspace*{-4mm}
\caption{The set $(A\oplus B\oplus C)\cap Z$ (grey and dotted) from the definition of $T_p(Z)$.}
\label{fig:triple-intersect}
\end{figure}

The right-hand sides of the first system are now defined by
\begin{equation}
\label{eq:yi}
y_i=
\!\!\!\!\!\!\sum_{(u_1,u_2,\ldots,u_i)\in U^i}\!\!\!\!\!\!
T_0\bigl(\oplus_{\ell=1}^i\{u_\ell\}\bigr)
-T_1\bigl(\oplus_{\ell=1}^i\{u_\ell\}\bigr)\,.
\end{equation}
Let us recall that the universe $U$ has $n$ elements.
For nonnegative integers $i$ and $j$, let us define the
Vandermonde matrix $V=(v_{ij})$ by setting 
\begin{equation}
\label{eq:v}
v_{ij}=(n-2j)^i\,.
\end{equation}
{\em Remark.} We recall from basic linear algebra that 
any $d\times d$ submatrix of a $d\times e$ Vandermonde matrix 
with entries $z_j^i$ for $i=0,1,\ldots,d-1$ and $j=0,1,\ldots,e-1$ 
has nonzero determinant if the values $z_j$ are pairwise distinct. 
This makes a Vandermonde matrix particularly well-suited for 
building systems of independent equations from multiple families of equations.

\begin{Lem}[First family]
\label{lem:first}
For all $i=0,1,\ldots,d-1$ it holds that 
\begin{equation}
\label{eq:first-family}
\sum_{j\in J_q}v_{ij}x_j=y_i.
\end{equation}
\end{Lem}
\begin{proof}
Let us fix a triple $A,B,C\in\binom{U}{q}$ with 
$|A\oplus B\oplus C|=j$. From \eqref{eq:jq} we have $j\in J_q$.
Let us write $m_p(i)$ for the
number of $i$-tuples $(u_1,u_2,\ldots,u_i)\in U^i$ 
such that 
\begin{equation}
\label{eq:u-parity}
\bigl|\bigl(A\oplus B\oplus C\bigr)\cap\bigl(\oplus_{\ell=1}^i\{u_\ell\}\bigr)\bigr|\equiv p\pmod 2\,.
\end{equation}
From \eqref{eq:xj}, \eqref{eq:tp}, and \eqref{eq:yi} we observe 
that the lemma is implied by
\begin{equation}
\label{eq:vm}
v_{ij}=m_0(i)-m_1(i)\,.
\end{equation}
Indeed, $v_{ij}$ and $m_0(i)-m_1(i)$ are the coefficients before $f(A)g(B)h(C)$ in the LHS and RHS of~\eqref{eq:first-family}, respectively.
To prove~\eqref{eq:vm}, we proceed by induction on $i$. The base case $i=0$ is set up by
observing
\begin{equation}
\label{eq:m-base}
\begin{split}
m_0(0)&=1\,,\\
m_1(0)&=0\,.
\end{split}
\end{equation}
For $i\geq 1$, let us study what happens if we extend
an arbitrary $(i-1)$-tuple $(u_1,u_2,\ldots,u_{i-1})\in U^{i-1}$
by a new element $u_i\in U$. We observe that we have exactly $n-j$ 
choices for the value $u_i$ among the elements of $U$ outside 
$A\oplus B\oplus C$ and $j$ choices inside $A\oplus B\oplus C$. 
The parity \eqref{eq:u-parity} changes if and only if 
we choose an element inside $A\oplus B\oplus C$.
Thus, for $i\geq 1$ we have
\begin{equation}
\label{eq:m-step}
\begin{split}
   m_0(i)&=(n-j) m_0({i-1}) + jm_1(i-1)\,,\\
   m_1(i)&=j m_0({i-1}) + (n-j)m_1({i-1})\,.
\end{split}
\end{equation}
From \eqref{eq:m-base} and \eqref{eq:m-step} we thus have
\[
\begin{split}
  m_0(0)-m_1(0) &= 1\,,\\ 
  m_0(i)-m_1(i) &= (n-2j)\bigl(m_0({i-1})-m_1({i-1})\bigr)\,.
\end{split}
\]
Hence, $m_0(i)-m_1(i)=(n-2j)^i$ and from \eqref{eq:v} we conclude that the
lemma holds.

\end{proof}

\subsection{A second family of equations}

Our second family of equations is based on solving for the 
indeterminates \eqref{eq:xj} directly. We state the following
lemma in a general form, but for performance reasons we will 
in fact later use only the equations indexed by the $e-d$ smallest 
values $j\in J_q$ in our linear system.

\begin{Lem}[Second family]
\label{lem:second}
For all $j\in J_q$ it holds that
\begin{equation}
\label{eq:xj-direct}
x_j
=\!\!
\sum_{\substack{\ell=q-j%
}}^{q+j}
\sum_{D\in\binom{U}{\ell}}
\sum_{\substack{A,B\in\binom{U}{q}\\A\oplus B=D}}
f(A)g(B)
\!\!\!\!\!\!\!\!\!\!\!\!\sum_{\substack{C\in\binom{U}{q}\\|C\cap D|=(q+\ell-j)/2}}\!\!\!\!\!\!\!\!\!\!\!\!
h(C)\,.\!\!\!
\end{equation}
\end{Lem}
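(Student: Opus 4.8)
The plan is to regroup the sum defining $x_j$ in \eqref{eq:xj} by introducing the auxiliary set $D:=A\oplus B$ as a new summation variable. Since every triple $A,B,C\in\binom{U}{q}$ determines a unique set $D=A\oplus B$, summing first over $\ell\geq 0$ and $D\in\binom{U}{\ell}$, then over pairs $A,B\in\binom{U}{q}$ with $A\oplus B=D$, and finally over $C\in\binom{U}{q}$, reproduces exactly the index set of \eqref{eq:xj}. The point of this regrouping is that, with $D$ fixed, the weight $f(A)g(B)$ depends only on the pair $(A,B)$ and $h(C)$ only on $C$, so the triple sum factors as a sum over $D$ of a product of two independent sums.

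Next I would rewrite the constraint $|A\oplus B\oplus C|=j$ purely in terms of $D$ and $C$. By associativity of the symmetric difference, $A\oplus B\oplus C=D\oplus C$, and by \eqref{eq:xor-size} we have $|D\oplus C|=|D|+|C|-2|D\cap C|=\ell+q-2|D\cap C|$ where $\ell=|D|$. Hence the condition $|A\oplus B\oplus C|=j$ is equivalent to $|C\cap D|=(\ell+q-j)/2$, which is precisely the constraint on the innermost sum of \eqref{eq:xj-direct}. Substituting this into the factored expression yields
\[
x_j=\sum_{\ell\geq 0}\sum_{D\in\binom{U}{\ell}}
\Bigl(\sum_{\substack{A,B\in\binom{U}{q}\\A\oplus B=D}}f(A)g(B)\Bigr)
\Bigl(\sum_{\substack{C\in\binom{U}{q}\\|C\cap D|=(\ell+q-j)/2}}h(C)\Bigr)\,.
\]

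Finally I would trim the range of $\ell$ down to the window $q-j\leq\ell\leq q+j$ with $\ell\equiv 0\pmod 2$ appearing in \eqref{eq:xj-direct}. The parity restriction is harmless because $\ell=|A\oplus B|=2q-2|A\cap B|$ is even whenever the inner pair sum is nonempty, and when $(\ell+q-j)/2$ fails to be an integer the innermost $C$-sum is over the empty set and contributes nothing; this also silently handles any $j\not\equiv q\pmod 2$, for which both sides vanish. For the numerical endpoints, $(\ell+q-j)/2$ must be a nonnegative integer at most $\min(\ell,q)$ for the $C$-sum to be nonempty, and a one-line case split on whether $\ell\leq q$ or $\ell\geq q$ shows this forces exactly $q-j\leq\ell\leq q+j$. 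Terms outside this window — including negative lower indices when $j>q$, or $\ell>2q$ — are identically zero and may be kept or dropped freely, so the restricted sum equals \eqref{eq:xj-direct}. The only mild obstacle is this bookkeeping of degenerate ranges and parities; the combinatorial content is simply $A\oplus B\oplus C=(A\oplus B)\oplus C$ together with \eqref{eq:xor-size}.
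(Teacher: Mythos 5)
Your proof is correct and follows essentially the same route as the paper: regroup the triple sum by $D=A\oplus B$, use $|C\oplus D|=q+\ell-2|C\cap D|$ to turn the constraint $|A\oplus B\oplus C|=j$ into the fixed intersection size $(q+\ell-j)/2$, and note that terms with $\ell$ outside $[q-j,q+j]$ (or of odd size) contribute empty sums. Your explicit handling of the degenerate cases ($j\not\equiv q\pmod 2$, negative or oversized $\ell$) is a harmless elaboration of what the paper leaves implicit.
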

\begin{proof}
We must show that the right-hand side of \eqref{eq:xj-direct} equals
\eqref{eq:xj}. Let us study a triple $A,B,C\in\binom{U}{q}$ 
with $|A\oplus B\oplus C|=j$. We observe that $q-j\leq |A \oplus B|\leq q+j$ 
because otherwise taking the symmetric difference with $C$ will either 
leave too many elements uncanceled or it cannot cancel enough of 
the elements in $D=A\oplus B$. Since $|A|=|B|=q$ from \eqref{eq:xor-size} 
it follows that $|D|$ is in fact always even. 
Furthermore, when $|D|=\ell$ we observe that 
\[
\begin{split}
j&=|A\oplus B\oplus C|\\
&=|C\oplus D|\\
&=|C|+|D|-2|C\cap D|\\
&=q+\ell-2|C\cap D|\,.
\end{split}
\]
The lemma follows by solving for $|C\cap D|$ and observing that
each triple $A,B,C$ uniquely determines $D=A\oplus B$.
\end{proof}

\subsection{The linear system}
We are now ready to combine equations from the two families to a system 
\begin{equation}
\label{eq:sys}
M\vec x=\vec y 
\end{equation}
of independent linear equations for the indeterminates 
$\vec x=(x_j:j\in J_q)$. Recalling \eqref{eq:jq}, there are exactly
$e=\lfloor 3q/2\rfloor+1$ indeterminates, and hence exactly $e$
independent equations are required.

Let us use a parameter $0\leq\gamma\leq 1/2$ 
in building the system. (The precise value of $\gamma$ will be 
determined later in \S\ref{sect:runtime}.)

We now select $d=\lfloor (3/2-\gamma)q\rfloor+1$ equations 
from the first family (Lemma~\ref{lem:yi}), and $e-d$ equations
from the second family (Lemma~\ref{lem:xj-direct}).
More precisely, we access the first family for $d$ 
equations indexed by $i=0,1,\ldots,d-1$, and the second family 
for the $e-d$ equations indexed by the smallest $e-d$ values $j\in J_q$. 
That is, if $q$ is even, we use equations indexed by 
$j\in\{0,2,\ldots,2(e-d-1)\}$, and if $q$ is odd, we use 
equations indexed by $j\in\{1,3,\ldots,2(e-d)-1\}$.
Thus, for all $q$ we conclude that 
\[
i\leq\lfloor(3/2-\gamma)q\rfloor
\]
and that 
\[
\begin{split}
j
&\leq 2(e-d)-1\\
&= 2\bigl(\lfloor 3q/2\rfloor-\lfloor(3/2-\gamma)q\rfloor\bigr)-1\\
&\leq 2\bigl(\lfloor\gamma q\rfloor+1\bigr)-1\\
&\leq \lfloor 2\gamma q\rfloor+1\,.
\end{split}
\]

Let us now verify that the selected system consists of independent 
equations. To verify this it suffices to solve the system.
The equations from the second family (Lemma~\ref{lem:xj-direct}) 
by construction solve directly for $e-d$ indeterminates. 
We are thus left with $d$ equations from the first family
(Lemma~\ref{lem:yi}). Now observe that since we know
the values of $e-d$ indeterminates, we can subtract their
contribution from both sides of the remaining equations, leaving
us $d$ equations over $d$ indeterminates. In fact (see the remark 
before Lemma~\ref{lem:yi}), the coefficient matrix of the remaining
system is a $d\times d$ submatrix of the original Vandermonde matrix,
and hence invertible. We conclude that the equations are independent.

It remains to argue that the system \eqref{eq:sys} can be constructed
and solved within the claimed running time.

\section{Efficient construction and solution}
\label{sect:solving}

This section proves Theorem~\ref{thm:main} by constructing and
solving the system derived in \S\ref{sect:system} within
the claimed running time. We start with some useful subroutines 
that enable us to efficiently construct the right-hand sides 
for \eqref{eq:yi} and \eqref{eq:xj-direct}.

\subsection{The intersection transform}

Let $s$ and $0\leq t\leq s$ be nonnegative integers.
For a function $f:\binom{U}{q}\rightarrow R$, define 
the {\em intersection transform} $f\iota_t:\binom{U}{s}\rightarrow R$ of $f$
for all $Z\in\binom{U}{s}$ by
\begin{equation}
\label{eq:ft}
f\iota_t(Z)=\!\sum_{\substack{A\in\binom{U}{q}\\|A\cap Z|=t}}\!f(A)\,.
\end{equation}

The following lemma is an immediate corollary of a theorem of Bj\"orklund~{\em et~al.}~\cite[Theorem 1]{BHKK08}.

\begin{Lem}
\label{lem:fit}
There exists an algorithm that evaluates all the ${|U| \choose s}$ values of the intersection transform
for all $0\leq t\leq s$ in time $O\bigl(n^{\max(s,q)+c}\bigr)$ for a 
constant $c\geq 0$ independent of constants $s$ and $q$. 
\end{Lem}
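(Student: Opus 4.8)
The plan is to reduce the evaluation of the intersection transform $f\iota_t$ for all $0\le t\le s$ to a single invocation of the transform from Theorem~1 of Bj\"orklund~{\em et~al.}~\cite{BHKK08}, up to polynomial-time pre- and post-processing. First I would recall what that theorem provides: given a function on subsets of a fixed size over an $n$-element universe, it computes, for every subset $Z$ of the relevant size, the weighted count of input sets $A$ meeting $Z$ in a prescribed number of elements, within time $O(n^{\max(s,q)+c})$ for some absolute constant $c\ge 0$. In other words, the object $f\iota_t(Z)$ for fixed $t$ and ranging $Z\in\binom{U}{s}$ is precisely one slice of the output of that theorem. The essential point is that a single run of the cited algorithm produces all slices $t=0,1,\dots,\min(s,q)$ simultaneously (since $|A\cap Z|$ cannot exceed $\min(s,q)$, the remaining values of $t$ contribute zero and need no work), so there is no loss in packaging them into one statement.

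The key steps in order: (1) State the cited theorem in the form we need, matching its notation to ours — the input is $f:\binom{U}{q}\to R$, the output is indexed by $Z\in\binom{U}{s}$ and by the intersection size $t$, and the time bound is $O(n^{\max(s,q)+c})$. (2) Check that the two boundary conventions agree: our definition \eqref{eq:ft} restricts $t$ to $0\le t\le s$, while intersection sizes are automatically bounded by $\min(s,q)$; for $t>\min(s,q)$ the sum in \eqref{eq:ft} is empty, so $f\iota_t\equiv 0$ and can be returned in time $O(n^s)$ without any computation, which is absorbed into the stated bound. (3) Observe that the ring-operation cost is handled exactly as in our standing assumption: every arithmetic operation in $R$ costs $O(n^c)$, and the cited algorithm performs $n^{\max(s,q)+O(1)}$ of them, giving the claimed bound after adjusting the constant $c$. (4) Conclude that running the cited algorithm once, then padding with the trivial zero slices, evaluates the intersection transform for all $0\le t\le s$ within $O(n^{\max(s,q)+c})$ time.

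The main (and essentially only) obstacle is bookkeeping rather than mathematics: one must verify that the transform defined in \cite[Theorem~1]{BHKK08} really is the ``intersection transform'' of \eqref{eq:ft} and not some close variant (e.g.\ counting by $|A\setminus Z|$ or $|A\cup Z|$, or assuming $s=q$, or producing only a single target value of $t$ rather than the whole profile). If the cited theorem is phrased for a specific pair of sizes or for a fixed $t$, a small argument is needed to lift it to the full range of $t$ and to arbitrary $s$ with $0\le t\le s$ — but this lift is routine: varying $t$ is free because one run yields all intersection sizes, and $\max(s,q)$ in the exponent already accounts for the case $s\ne q$. Because the paper explicitly asserts the lemma is ``an immediate corollary,'' I expect the proof to consist of precisely this matching-of-definitions together with the trivial-slice padding remark, with no nontrivial new idea required.
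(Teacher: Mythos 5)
Your proposal matches the paper exactly: the paper gives no separate argument for this lemma, stating only that it is an immediate corollary of Theorem~1 of Bj\"orklund~{\em et~al.}~\cite{BHKK08}, which is precisely the definition-matching reduction (together with the trivial handling of $t>\min(s,q)$ and the ring-operation cost) that you describe. No further comment is needed.
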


\noindent
{\em Remark.} 
Lemma~\ref{lem:fit} can be stated in an alternative form that counts 
the number of arithmetic operations 
(addition, subtraction, multiplication, and exact division of integers) 
performed by the algorithm on the input $f$ to obtain $f\iota_t$ for all 
$0\leq t\leq s$. This form is obtained by simply removing the constant 
$c$ from the bound in Lemma~\ref{lem:fit}.
(Indeed, we can use Bareiss's algorithm~\cite{B68} to solve the 
underlying linear system with exact divisions.)

\subsection{The parity transform}
\label{sect:parity-trans}

Let $s$ be a nonnegative integer and let $p\in\{0,1\}$. 
For a function $f:\binom{U}{q}\rightarrow R$, define 
the {\em parity transform} $f\pi_p:\binom{U}{s}\rightarrow R$ of $f$
for all $Z\in\binom{U}{s}$ by
\begin{equation}
\label{eq:fp}
f\pi_p(Z)=
\!\!\!\!\!\!\!\!\!\sum_{\substack{A\in\binom{U}{q}\\|A\cap Z|\equiv p\!\!\!\pmod 2}}\!\!\!\!\!\!\!\!\!
f(A)\,.
\end{equation}

\begin{Lem}
\label{lem:fpt}
There exists an algorithm that evaluates the parity transform
for $p\in\{0,1\}$ in time $O\bigl(n^{\max(s,q)+c}\bigr)$ for a 
constant $c\geq 0$ independent of constants $s$ and $q$. 
\end{Lem}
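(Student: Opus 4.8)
The plan is to reduce Lemma~\ref{lem:fpt} to the intersection transform of Lemma~\ref{lem:fit}, since the parity classes $|A\cap Z|\equiv p\pmod 2$ are simply unions of the level sets $|A\cap Z|=t$ over $t$ of a fixed parity. Concretely, I would observe that for every $Z\in\binom{U}{s}$ we have
\[
f\pi_0(Z)=\sum_{\substack{0\leq t\leq s\\ t\equiv 0\!\!\!\pmod 2}} f\iota_t(Z)\,,
\qquad
f\pi_1(Z)=\sum_{\substack{0\leq t\leq s\\ t\equiv 1\!\!\!\pmod 2}} f\iota_t(Z)\,,
\]
directly from the definitions \eqref{eq:ft} and \eqref{eq:fp}, because each $A\in\binom{U}{q}$ contributes $f(A)$ to exactly one level $t=|A\cap Z|$, and that level has a well-defined parity. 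So the parity transform is obtained by partitioning the $s+1$ values of the intersection transform into two parity classes and summing within each class.

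The algorithm is then immediate: first invoke the algorithm of Lemma~\ref{lem:fit} to compute $f\iota_t(Z)$ for all $0\leq t\leq s$ and all $Z\in\binom{U}{s}$, which costs $O(n^{\max(s,q)+c})$ time; then, for each of the $\binom{n}{s}=O(n^s)$ sets $Z$, form the two sums above, each of which is a sum of at most $s+1$ ring elements and hence costs $O(s\cdot n^c)$ ring-operation time with $s$ a constant. The total additional work is $O(n^{s+c})$, which is absorbed into $O(n^{\max(s,q)+c})$. I would remark that $s$ and $q$ are constants by the standing assumptions, so the number of terms $s+1$ and the bookkeeping over parity classes contribute only constant factors, and that each ring operation costs $O(n^c)$ by hypothesis, keeping the exponent of $n$ unchanged.

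I do not anticipate a genuine obstacle here; the only point requiring a word of care is making sure the aggregation step does not inflate the exponent, i.e.\ that iterating over all $Z\in\binom{U}{s}$ and summing $O(s)$ values stays within $O(n^{s+c})\subseteq O(n^{\max(s,q)+c})$, which holds precisely because $s$ is constant. A secondary minor point is the degenerate ranges (e.g.\ $s<q$, so that some levels $t>s$ are vacuous, or $q<s$): in all cases the level sets with $t$ outside $[0,\min(s,q)]$ are empty, so the parity sums are well defined and the identity above holds verbatim. Hence the lemma follows.
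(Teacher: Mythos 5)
Your proposal is correct and matches the paper's proof exactly: the paper also observes that $f\pi_p=\sum_{t\equiv p\!\pmod 2}f\iota_t$ and invokes Lemma~\ref{lem:fit}, with the aggregation over $Z$ absorbed into the same bound. Your additional bookkeeping about the $O(n^{s+c})$ summation cost is a fine (if unstated in the paper) elaboration of the same argument.
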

\begin{proof}
We observe that 
\[
f\pi_p=\sum_{\substack{t\in\{0,1,\ldots,s\}\\t\equiv p\!\!\!\pmod 2}}f\iota_t
\]
and apply Lemma~\ref{lem:fit}.
\end{proof}

\subsection{Evaluating the right-hand side of the first family}
Let $i$ be a nonnegative integer. Our objective is to evaluate 
the right-hand side of \eqref{eq:yi}.
Let us start by observing that it suffices to compute the values
\eqref{eq:tp} for all $Z\subseteq U$ with $|Z|\leq i$. 

The following lemma will be useful towards this end.
Denote by $L_n(i,s)$ the number of tuples $(u_1,u_2,\ldots,u_i)\in U^i$ with 
$s=|\oplus_{\ell=1}^i\{u_\ell\}|$ and $n=|U|$.

\begin{Lem}
\label{lem:lnis}
We have
\[
L_n(i,s)=\begin{cases}
1 & \text{if $i=s=0$};\\
0 & \text{if $i<s$};\\
L_n(i-1,1) & \text{if $i\geq 1$ and $s=0$};\\[2.5mm]
(n-s+1)L_n(i-1,s-1)& \\
\quad\ +(s+1)L_n(i-1,s+1) & \text{if $i\geq s\geq 1$}.\\
\end{cases}
\]
In particular, the values $L_n(i,s)$ can be computed for all
$0\leq s\leq i\leq 3q/2\leq n$ in time $O(q^2)$.
\end{Lem}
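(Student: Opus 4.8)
The plan is to establish the recurrence for $L_n(i,s)$ by a direct combinatorial argument, tracking how the running symmetric difference $\oplus_{\ell=1}^{j}\{u_\ell\}$ evolves as $j$ increases from $0$ to $i$. First I would set up the base cases: the empty tuple ($i=0$) produces the empty set, so $L_n(0,0)=1$ and $L_n(0,s)=0$ for $s\geq 1$; more generally, since each step can change the size of the accumulated set by exactly $\pm 1$, after $i$ steps the size has the same parity as $i$ and is at most $i$, hence $L_n(i,s)=0$ whenever $i<s$ (and also whenever $i\not\equiv s\pmod 2$, which is automatically enforced by the recurrence).

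For the inductive step with $i\geq s\geq 1$, I would argue by conditioning on whether the last element $u_i$ lies inside or outside the set $Z'=\oplus_{\ell=1}^{i-1}\{u_\ell\}$ accumulated after the first $i-1$ steps. If $u_i\notin Z'$ then $Z = Z'\cup\{u_i\}$ has size $|Z'|+1$, so we need $|Z'|=s-1$ and there are $n-(s-1)=n-s+1$ valid choices of $u_i\in U\setminus Z'$ that land in the prescribed target set $Z$ — but one must be slightly careful here: given the target $Z$ of size $s$, the $(i-1)$-prefix determines $Z'\subset Z$ with $|Z'|=s-1$ and then $u_i$ is the unique missing element of $Z$, so actually the correct way to phrase it is that we sum over which element of $Z$ is "added last," giving the factor coming from the number of size-$(s-1)$ subsets reachable. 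The cleaner bookkeeping, which matches the stated recurrence, is: $u_i$ is either the element of $Z$ not in $Z'$ (so $Z' = Z\setminus\{u_i\}$, and $u_i$ ranges over the $s$ elements of... ) — I will instead use the forward formulation the recurrence suggests, namely extend a tuple realizing $Z'$ of size $s-1$ or $s+1$: there are $n-(s-1)$ choices of a fresh element outside $Z'$ landing us at size $s$, but only... Here the key realization is that the recurrence counts by the size $s$ of the final set while the "outside" transitions from size $s-1$ have $n-s+1$ available elements (any element of $U$ not among the $s-1$ elements of $Z'$, and exactly one such choice per target) — wait, this needs the observation that from a fixed $Z'$ of size $s-1$, there are $n-(s-1)$ elements outside, each producing a \emph{different} target $Z$ of size $s$, so summed over all size-$s$ targets the total contribution is $(n-s+1)L_n(i-1,s-1)$; symmetrically, from a fixed $Z'$ of size $s+1$, choosing $u_i\in Z'$ (one of $s+1$ elements) removes it, producing a target of size $s$, contributing $(s+1)L_n(i-1,s+1)$. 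Summing the two disjoint cases gives the recurrence. The case $s=0$ is special only because size $-1$ is impossible, so the first term drops and we get $L_n(i,0)=(n-(0)+1)\cdot 0 + 1\cdot L_n(i-1,1)=L_n(i-1,1)$, matching the stated formula.

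The main obstacle, such as it is, is getting the combinatorial bookkeeping exactly right — specifically being careful that the recurrence is "target-indexed" (it counts, for a \emph{single} fixed $Z$ of size $s$, the number of realizing tuples, and this count depends on $Z$ only through $s$), so that when we peel off the last step we must re-sum over the intermediate target $Z'$ and correctly count how many choices of $u_i$ carry a given $Z'$ to \emph{our} $Z$ versus to \emph{some} $Z$. I would make this precise by noting that $L_n(i,s)$ is well-defined (independent of which $Z$ of size $s$ we pick) by symmetry of $U$ under permutations, and then the two transition counts $n-s+1$ and $s+1$ are exactly the number of elements $u_i$ for which $Z\oplus\{u_i\}$ has size $s-1$ (namely $u_i\in Z$, but that gives size $s-1$ only... no: $u_i\in Z$ gives $Z\setminus\{u_i\}$ of size $s-1$; $u_i\notin Z$ gives size $s+1$) — I will phrase the induction in the "peel from the front" or "peel from the back" direction whichever makes the coefficients transparent, most likely peeling the \emph{last} element and conditioning on $u_i\in Z$ (then $Z'=Z\oplus\{u_i\}$ has size $s-1$, and there are... $s$ such elements, not $n-s+1$ — so in fact the recurrence must be peeling the element that \emph{enters}, i.e.\ conditioning on whether $u_i$ is the "new" element, which forces $Z'$ of size $s-1$ with $n-s+1$ ways to have picked the outside element originally). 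I will sort out this direction carefully in the writeup; the two cases ($u_i\in Z$ giving the $(s+1)$-term via $Z'$ of size $s+1$, and $u_i\notin Z$ giving the $(n-s+1)$-term via $Z'$ of size $s-1$) are the content.

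Finally, for the complexity claim, I would simply observe that the table has $O(i\cdot s)=O((3q/2)^2)=O(q^2)$ entries (for $0\le s\le i\le 3q/2$), each computed in constant time from at most two earlier entries via the recurrence, using that arithmetic on the integers involved is within the assumed ring-operation budget; hence all values are computed in $O(q^2)$ ring operations, which is $O(q^2)$ time under the standing conventions. This yields the stated bound.
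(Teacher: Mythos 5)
Your overall plan (peel off the last element, split into two cases, then tabulate) has the right shape, but the proof does not go through because you never settle \emph{which} quantity the recurrence is about, and under the reading you explicitly commit to --- ``target-indexed'', i.e.\ $L_n(i,s)$ counts tuples whose symmetric difference equals one fixed $Z$ with $|Z|=s$ --- the stated recurrence is false. Fixing $Z$ and conditioning on $u_i$: if $u_i\in Z$ then $Z'=Z\setminus\{u_i\}$ has size $s-1$ and there are $s$ choices of $u_i$; if $u_i\notin Z$ then $Z'=Z\cup\{u_i\}$ has size $s+1$ and there are $n-s$ choices. So the fixed-target count satisfies $L_n(i,s)=s\,L_n(i-1,s-1)+(n-s)\,L_n(i-1,s+1)$ and $L_n(i,0)=n\,L_n(i-1,1)$, not the recurrence in the lemma; as a sanity check, the fixed-target value $L_n(1,1)$ equals $1$, while the stated recurrence yields $n$. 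The coefficients $(n-s+1)$ and $(s+1)$ belong to the \emph{aggregate} quantity, the number of $i$-tuples whose symmetric difference has size exactly $s$ (which is $\binom{n}{s}$ times the fixed-target count). You brush against exactly this point mid-argument (``there are\ \dots\ $s$ such elements, not $n-s+1$'') but then recover the desired coefficients by ``summing over all size-$s$ targets'', which is precisely the silent switch from the fixed-target count to the aggregate count --- the one step that needed to be justified rather than blurred. Your closing parenthetical also inverts the two cases: $u_i\in Z$ forces $|Z'|=s-1$, not $s+1$.

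The paper's own one-line proof is the forward-insertion argument for the aggregate reading: a tuple with exactly $s$ elements occurring an odd number of times arises either from one with $s-1$ such elements by inserting a fresh element ($n-s+1$ choices) or from one with $s+1$ such elements by inserting one of them ($s+1$ choices). To repair your write-up you must pick one interpretation and carry it through: either prove the stated recurrence for the aggregate count (and then note that it differs by the factor $\binom{n}{s}$ from the fixed-$Z$ definition given just before the lemma, which is the form actually used in \eqref{eq:yi-tp} --- a bookkeeping discrepancy that your hesitation in effect uncovered), or keep the fixed-$Z$ definition and prove the corrected recurrence displayed above. The final part of your argument, tabulating the $O(q^2)$ entries in lexicographic order with constant work per entry, is fine and matches the paper.
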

\begin{proof}
When we insert an element $u_i$ into a tuple we may obtain a tuple with exactly $s\geq 1$ elements that occur an odd number of times in two different ways: 
either we had $s-1$ such elements and insert a new element ($n-s+1$ choices), 
or we had $s+1$ such elements and insert one of them. 
The running time follows by tabulating the values $L_n(i,s)$ in increasing 
lexicographic order in $(i,s)$. This completes the lemma.
\end{proof}

Now let us reduce \eqref{eq:yi} to \eqref{eq:tp}. In particular,
we have 
\begin{equation}
\label{eq:yi-tp}
y_i=\sum_{s=0}^i \tbinom{n}{s}^{-1}L_n(i,s)\sum_{Z\in\binom{U}{s}}T_0(Z)-T_1(Z)\,.
\end{equation}
Indeed, by symmetry each $Z\in{U\choose s}$ has the same number $l_{i,s}$ of tuples $(u_1,u_2,\ldots,u_i)$ such that $|\!\oplus_{\ell=1}^i\{u_{\ell}\}|=Z$. Hence, $L_n(i,s)={n \choose s}l_{i,s}$ and~\eqref{eq:yi-tp} follows.
Thus it remains to compute the values \eqref{eq:tp}.
At this point it is convenient to recall Fig.~\ref{fig:triple-intersect}.
We have
\begin{equation}
\label{eq:xor-cap}
\begin{split} 
\bigl|(A\oplus B\oplus C)\cap Z\bigr|
&= 
|(A \cap Z) \oplus(B \cap Z) \oplus(C \cap Z)| \\
&\equiv
|A\cap Z|
+|B\cap Z|
+|C\cap Z|\pmod 2\,,
\end{split}
\end{equation}
where the congruence follows from~\eqref{eq:xor-size-equiv-sum}.
Let us use the shorthand $\bar p=1-p$ for the complement of $p\in\{0,1\}$.
Denoting pointwise multiplication of functions by ``$\cdot$'',
from \eqref{eq:xor-cap}, \eqref{eq:tp} and \eqref{eq:fp} it immediately
follows that
\begin{equation}
\label{eq:tpfgh}
\begin{split}
T_p
=f\pi_p\cdot g\pi_p\cdot h\pi_p
+f\pi_{\bar p}\cdot g\pi_{\bar p}\cdot h\pi_p
+f\pi_{\bar p}\cdot g\pi_p\cdot h\pi_{\bar p}
+f\pi_p\cdot g\pi_{\bar p}\cdot h\pi_{\bar p}\,.
\end{split}
\end{equation}

\begin{Lem}
\label{lem:yi}
There exists an algorithm that for $0\leq\gamma\leq 1/2$ evaluates
the right-hand side \eqref{eq:yi} for all $0\leq i\leq(3/2-\gamma)q$
in time $O(n^{(3/2-\gamma)q+c})$ for a constant $c\geq 0$ 
independent of constants $\gamma$ and $q$.
\end{Lem}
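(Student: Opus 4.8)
The plan is to bound the cost of assembling $y_i$ for all $0\leq i\leq(3/2-\gamma)q$ by working through the reduction chain \eqref{eq:yi-tp}, \eqref{eq:tpfgh}, backward. First I would fix the index $s$ ranging over $0\leq s\leq(3/2-\gamma)q$; by \eqref{eq:yi-tp} it suffices to compute, for each such $s$, the single scalar $\sum_{Z\in\binom{U}{s}}\bigl(T_0(Z)-T_1(Z)\bigr)$, since Lemma~\ref{lem:lnis} provides all the coefficients $L_n(i,s)$ in time $O(q^2)$ and the final combination over $i$ and $s$ is then a negligible $O(q^2)$ arithmetic operations on top of the $O(n^c)$ cost per ring operation. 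So the real work is the evaluation of the functions $T_p:\binom{U}{s}\to R$ for $p\in\{0,1\}$ and all $s$ in the stated range, followed by summing their values over $\binom{U}{s}$.

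Next I would invoke \eqref{eq:tpfgh}, which expresses $T_p$ as a sum of four pointwise products of parity transforms $f\pi_p, g\pi_p, h\pi_p$ (and their complements) evaluated at sets of size $s$. By Lemma~\ref{lem:fpt}, all of $f\pi_0,f\pi_1,g\pi_0,g\pi_1,h\pi_0,h\pi_1$ restricted to $\binom{U}{s}$ can be computed in time $O(n^{\max(s,q)+c})$. Given these tables, forming the four pointwise products and adding them to obtain $T_p(Z)$ for every $Z\in\binom{U}{s}$ costs $O(n^s)$ ring operations, i.e.\ $O(n^{s+c})$ time, and then summing over the $\binom{U}{s}=O(n^s)$ sets $Z$ costs another $O(n^{s+c})$. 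Hence the total cost for a fixed $s$ is $O(n^{\max(s,q)+c})$. Since $s$ ranges up to $\lfloor(3/2-\gamma)q\rfloor$ and $\gamma\leq 1/2$ forces $(3/2-\gamma)q\geq q$, the bottleneck term is $s=\lfloor(3/2-\gamma)q\rfloor$, giving $\max(s,q)=s\leq(3/2-\gamma)q$. Summing the geometric-like series over the at most $(3/2-\gamma)q+1$ values of $s$ only multiplies by a $\mathrm{poly}(q)$ factor, which is absorbed into the $O(n^c)$ slack (or into $n^{o(1)}$); either way the bound $O(n^{(3/2-\gamma)q+c})$ stands.

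The one point that needs a little care — and the closest thing to an obstacle — is making sure the exponent is exactly $(3/2-\gamma)q$ and not $(3/2-\gamma)q$ plus something from the outer loop, and that the $q$ in $\max(s,q)$ never dominates. Both are handled by the observation just made: $0\leq\gamma\leq 1/2$ gives $(3/2-\gamma)q\in[q,\tfrac32 q]$, so $\max(s,q)=s$ throughout the range, and the number of distinct $s$ values is $O(q)=n^{o(1)}$, which is harmless. I would also note that Lemma~\ref{lem:lnis} explicitly requires $i\leq 3q/2\leq n$, which is consistent with our range, and that the coefficients $L_n(i,s)$ are integers embedded in $R$, so multiplying a ring value by $L_n(i,s)$ is a bounded number of ring additions (or a single multiplication once $L_n(i,s)$ is embedded), again $O(n^c)$ per operation. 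Assembling these pieces — Lemma~\ref{lem:fpt} for the transforms, \eqref{eq:tpfgh} for $T_p$, \eqref{eq:yi-tp} with Lemma~\ref{lem:lnis} for the recombination — yields the claimed running time, completing the proof of Lemma~\ref{lem:yi}.
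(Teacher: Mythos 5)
Your proposal is correct and follows essentially the same route as the paper's proof: Lemma~\ref{lem:fpt} for the parity transforms, \eqref{eq:tpfgh} to assemble $T_p$, and \eqref{eq:yi-tp} with Lemma~\ref{lem:lnis} for the recombination, with the $\mathrm{poly}(q)$ overhead absorbed as you describe. One trivial slip: for $s<q$ one has $\max(s,q)=q$ rather than $s$, but since $\gamma\leq 1/2$ gives $q\leq(3/2-\gamma)q$ this does not affect the stated bound.
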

\begin{proof}
First evaluate the parity transforms $f\pi_p$, $g\pi_p$, $h\pi_p$ 
for all $p\in\{0,1\}$ and $s\leq (3/2-\gamma)q$ using Lemma~\ref{lem:fpt}.
Then use \eqref{eq:tpfgh} to evaluate $T_p$ 
for all $p\in\{0,1\}$ and $s\leq (3/2-\gamma)q$. Finally
compute the coefficients $L_n(i,s)$ using Lemma~\ref{lem:lnis} 
and evaluate the right-hand sides \eqref{eq:yi} via \eqref{eq:yi-tp}.
\end{proof}

\subsection{The symmetric difference product}
\label{sect:sdp}

Let $\ell$ be a nonnegative even integer. For $f,g:\binom{U}{q}\rightarrow R$
define the symmetric difference product 
$f\oplus g:\binom{U}{\ell}\rightarrow R$ for all $D\in\binom{U}{\ell}$
by
\begin{equation}
\label{eq:f-oplus-g}
(f\oplus g)(D)=
\!\!\sum_{\substack{A,B\in\binom{U}{q}\\A\oplus B=D}}\!\!f(A)g(B)\,.
\end{equation}
From \eqref{eq:xor-size} we observe that if $|A\oplus B|=\ell$ 
with $|A|=|B|=q$ then $|A\cap B|=q-\ell/2$ and 
$|A\setminus B|=|B\setminus A|=\ell/2$. Define the matrix
$F$ with rows indexed by $I\in\binom{U}{\ell/2}$ and
columns indexed by $K\in\binom{U}{q-\ell/2}$ with
the $(I,K)$-entry defined by
\begin{equation}
\label{eq:F}
F(I,K)=
\begin{cases}
f(I\cup K) & \text{if $I\cap K=\emptyset$};\\
0          & \text{otherwise}.
\end{cases}
\end{equation}
Define the matrix
$G$ with rows indexed by $K\in\binom{U}{q-\ell/2}$
and columns indexed by $J\in\binom{U}{\ell/2}$ with
the $(K,J)$-entry defined by
\begin{equation}
\label{eq:G}
G(K,J)=
\begin{cases}
g(K\cup J) & \text{if $K\cap J=\emptyset$};\\
0          & \text{otherwise}.
\end{cases}
\end{equation}
From \eqref{eq:F} and \eqref{eq:G}
we observe that the product matrix $FG$ enables us to recover
the symmetric difference product \eqref{eq:f-oplus-g} for all 
$D\in\binom{U}{\ell}$ by 
\begin{equation}
\label{eq:from-fg}
(f\oplus g)(D)=\sum_{I\in\binom{D}{\ell/2}}FG\bigl(I,D\setminus I\bigr)\,.
\end{equation}

Recall that we write $\omega$ for the limiting exponent of square matrix 
multiplication, $2\leq \omega<2.3728639$, and $\alpha$ for the limiting exponent such that multiplying an $N\times N^\alpha$ matrix with an $N^\alpha\times N$ matrix takes $N^{2+o(1)}$ arithmetic operations, $0.30<\alpha\leq 3-\omega$.
For generic rectangular matrices it is known (cf.~\cite{LR83}) that the product of 
an $N\times M$ matrix and an $M\times N$ matrix can be computed using

\medskip
\noindent
\begin{tabular}{@{}l@{\ \ }p{11.5cm}@{}}
(M1) & $O(N^{2-\alpha \beta}M^\beta+N^2)$ arithmetic operations, with $\beta=(\omega-2)/(1-\alpha)$ when $M\leq N$, and\\[\medskipamount]
(M2) & $O(N^{\omega-1}M)$ arithmetic operations 
by decomposing the product into a sum of $\lceil M/N\rceil$ products 
of $N\times N$ square matrices when $M\geq N$.
\end{tabular}
\medskip

\noindent
{\em Remark.} The bounds above are not the best possible \cite{LG12}; 
however, to provide a clean exposition we will work with these 
somewhat sub-state-of-the-art bounds.

\begin{Lem}
\label{lem:fsdp}
There exists an algorithm that for $0\leq\gamma\leq 1/2$
evaluates the symmetric difference product \eqref{eq:f-oplus-g} 
for all even $(1-2\gamma)q-1\leq\ell\leq(1+2\gamma)q+1$ in time 
\[
O\bigl(n^{\omega q/2+(2-\alpha\beta-\beta)\gamma q+c}+n^{(1+2\gamma)q+c}\bigr)
\] 
for a constant $c\geq 0$ independent of constants $\gamma$ and $q$. 
\end{Lem}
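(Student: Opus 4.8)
The plan is to compute the symmetric difference product $f\oplus g$ for each admissible even value of $\ell$ via the matrix product $FG$ of \eqref{eq:F} and \eqref{eq:G}, and then account for the cost over the whole range $(1-2\gamma)q\le\ell\le(1+2\gamma)q$. For a fixed $\ell$, write $m=\ell/2$ and $r=q-\ell/2=q-m$; the matrix $F$ has dimensions $\binom{n}{m}\times\binom{n}{r}$ and $G$ has dimensions $\binom{n}{r}\times\binom{n}{m}$, so the product $FG$ is a product of an $N\times M$ matrix by an $M\times N$ matrix with outer dimension $N=\binom{n}{m}=\Theta(n^{m})=\Theta(n^{\ell/2})$ and inner dimension $M=\binom{n}{r}=\Theta(n^{q-\ell/2})$. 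Building $F$ and $G$ from $f$ and $g$ takes time $O(NM+n^c)=O(n^{q}+n^{c})$ by \eqref{eq:F}--\eqref{eq:G}, and reading off $(f\oplus g)(D)$ for all $D\in\binom{U}{\ell}$ via \eqref{eq:from-fg} costs $O\bigl(\binom{n}{\ell}\binom{\ell}{\ell/2}\bigr)=O(n^{\ell})$ ring operations, hence $O(n^{\ell+c})$ time; since $\ell\le(1+2\gamma)q$, both of these are absorbed into the $n^{(1+2\gamma)q+c}$ term.

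The heart of the argument is the matrix multiplication step, and here the key observation is that over the stated range of $\ell$ we always have $M=\Theta(n^{q-\ell/2})$ and $N=\Theta(n^{\ell/2})$ with $\ell/2$ ranging in $[(1/2-\gamma)q,(1/2+\gamma)q]$, so that $M\le N$ precisely when $\ell\ge q$ and $M\ge N$ when $\ell\le q$; but in the regime $\ell\le q$ we have $M\le n^{q-\ell/2}\le n^{q}$ and $N\ge n^{(1/2-\gamma)q}$, so $M/N\le n^{(1/2+\gamma)q}$, and bound (M2) gives $O(N^{\omega-1}M)=O(n^{(\omega-1)\ell/2}\,n^{q-\ell/2})$ operations; one checks that over $(1-2\gamma)q\le\ell\le q$ this is maximized at $\ell=q$ (the exponent $(\omega-1)\ell/2+q-\ell/2=(\omega-2)\ell/2+q$ is increasing in $\ell$ since $\omega\ge2$), giving $O(n^{(\omega-2)q/2+q})=O(n^{\omega q/2})$ — comfortably below the claimed bound. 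In the regime $q\le\ell\le(1+2\gamma)q$ we have $M\le N$, so bound (M1) applies with $\beta=(\omega-2)/(1-\alpha)$, giving $O(N^{2-\alpha\beta}M^{\beta}+N^{2})$ operations; substituting $N=\Theta(n^{\ell/2})$ and $M=\Theta(n^{q-\ell/2})$ yields exponent $(2-\alpha\beta)\ell/2+\beta(q-\ell/2)=\ell/2\cdot\bigl(2-\alpha\beta-\beta\bigr)+\beta q$, which I expect to be increasing in $\ell$ (this is where one uses $\alpha\le3-\omega$, equivalently $\beta\le1$, so that $2-\alpha\beta-\beta\ge2-\beta-\beta\ge0$ fails in general — more carefully, $2-\alpha\beta-\beta>0$ because $\alpha\beta+\beta=(1+\alpha)\beta=(1+\alpha)(\omega-2)/(1-\alpha)<2$ for the relevant $\alpha,\omega$), so the worst case is $\ell=(1+2\gamma)q$, i.e. $\ell/2=(1/2+\gamma)q$, giving exponent $(1/2+\gamma)q(2-\alpha\beta-\beta)+\beta q$. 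A short computation rewrites this as $\omega q/2+(2-\alpha\beta-\beta)\gamma q$ after using $\beta+\tfrac12(2-\alpha\beta-\beta)=1+\tfrac12\beta-\tfrac12\alpha\beta=1+\tfrac12\beta(1-\alpha)=1+\tfrac12(\omega-2)=\omega/2$; the $N^2$ term contributes exponent $\ell\le(1+2\gamma)q$, absorbed into the second term of the bound.

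The main obstacle will be the bookkeeping in the second regime: verifying that the exponent $(2-\alpha\beta-\beta)\ell/2+\beta q$ is genuinely monotone increasing in $\ell$ on $[q,(1+2\gamma)q]$ (so that the endpoint $\ell=(1+2\gamma)q$ dominates), and that the algebraic identity $\beta+\tfrac12(2-\alpha\beta-\beta)=\omega/2$ — equivalently $\tfrac12\beta(1-\alpha)=\tfrac12(\omega-2)$, which is just the definition $\beta=(\omega-2)/(1-\alpha)$ — correctly produces the clean form $\omega q/2+(2-\alpha\beta-\beta)\gamma q$ stated in the lemma. I would carry this out as follows: first fix $\ell$, set up $F$, $G$ and bound the construction and read-off costs; second, split on $\ell\le q$ versus $\ell\ge q$, apply (M2) and (M1) respectively, and in each case reduce to the worst-case $\ell$; third, simplify the worst-case exponents to the two terms in the statement; and finally multiply by the $O(n^c)$ cost per ring operation and by the $O(q)=O(1)$ number of values of $\ell$, noting that $q$ is constant. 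Summing over all even $\ell$ in the range only multiplies the count by a constant (depending on $q$), so the stated running time $O\bigl(n^{\omega q/2+(2-\alpha\beta-\beta)\gamma q+c}+n^{(1+2\gamma)q+c}\bigr)$ follows.
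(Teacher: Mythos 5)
Your proposal is correct and follows essentially the same route as the paper: embed $f,g$ into matrices of (padded) dimensions $n^{\ell/2}\times n^{q-\ell/2}$ and $n^{q-\ell/2}\times n^{\ell/2}$, apply (M2) in the regime $\ell\le q$ and (M1) in the regime $q\le\ell\le(1+2\gamma)q$, take the worst-case $\ell$ at the endpoints, and simplify via $\beta(1-\alpha)=\omega-2$ to obtain the exponent $\omega q/2+(2-\alpha\beta-\beta)\gamma q$, with the $N^2$ (resp.\ read-off) costs absorbed into $n^{(1+2\gamma)q+c}$. Your explicit monotonicity check that $2-\alpha\beta-\beta>0$ (so the endpoint $\ell=(1+2\gamma)q$ dominates) is a detail the paper leaves implicit, and is correct.
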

\begin{proof}
For convenience we may pad $F$ and $G$ with all-zero rows and columns 
so that $F$ becomes an $n^{\ell/2}\times n^{q-\ell/2}$ matrix
and $G$ becomes an $n^{q-\ell/2}\times n^{\ell/2}$ matrix.

For $\ell\leq q$ by (M2) we can thus multiply $F$ and $G$ using
$O(n^{\omega \ell/2+q-\ell})\subseteq O(n^{\omega q/2})$ arithmetic 
operations and hence in time $O(n^{\omega q/2+c})$. 

For $\ell\geq q$ by (M1) we can thus multiply $F$ and $G$ using 
\[
O\bigl(n^{(2-\alpha\beta)\ell/2+\beta(q-\ell/2)}+n^\ell\bigr)
\]
arithmetic operations.
For $q\leq\ell\leq(1+2\gamma)q+1$ the linear function 
\[
u(\ell)=(2-\alpha\beta)\ell/2+\beta(q-\ell/2)
\]
has its maximum at $\ell = q$ or at $\ell=(1+2\gamma)q+1$. 
Noting that $2-\alpha\beta+\beta=\omega\geq 2$, at $\ell=q$ we
obtain the bound $O(n^{\omega q/2+c})$ for the running time.
At $\ell=(1+2\gamma)q+1$ we obtain the running time bound 
\[
O\bigl(n^{(2-\alpha\beta)(1/2+\gamma)q+\beta(1/2-\gamma)q+c}+n^{(1+2\gamma)q+c}\bigr)\,.
\] 
Again noting that $2-\alpha\beta+\beta=\omega$, this bound simplifies to 
\[
O\bigl(n^{\omega q/2+(2-\alpha\beta-\beta)\gamma q+c}+n^{(1+2\gamma)q+c}\bigr)\,.
\] 
It remains to analyze the quantity $2-\alpha\beta-\beta$. 
Towards this end, recall that $0.3<\alpha\leq 3-\omega$ and 
$2\leq\omega<2.3728639$, with $\alpha=1$ if and only 
if $\omega=2$. The lemma now follows by observing that 
\[
\begin{split}
2-\alpha\beta-\beta 
&=\omega-2\beta\\
&=\omega-\frac{2(\omega-2)}{1-\alpha}\\
&= \frac{4-(1+\alpha)\omega}{1-\alpha}\\
&\geq \frac{4-(4-\omega)\omega}{1-\alpha}\\
&\geq 0\,.
\end{split}
\]
\end{proof}

\subsection{Evaluating the right-hand side of the second family}

Let $j$ be a nonnegative integer. Our objective is to evaluate 
the right-hand side of \eqref{eq:xj-direct}. Let us start
by observing that \eqref{eq:xj-direct} can be stated using
the symmetric difference product \eqref{eq:f-oplus-g} 
and the intersection transform \eqref{eq:ft} in equivalent
form
\begin{equation}
\label{eq:xj-direct-2}
x_j=
\sum_{\substack{\ell=q-j%
}}^{q+j}
\sum_{D\in\binom{U}{\ell}}
(f\oplus g)(D)\cdot h\iota_{(q+\ell-j)/2}(D)\,.
\end{equation}

\begin{Lem}
\label{lem:xj-direct}
There exists an algorithm that for $0\leq\gamma\leq 1/2$
evaluates the right-hand-side of \eqref{eq:xj-direct}
for all $0\leq j\leq 2\gamma q+1$ in time 
\[
O\bigl(n^{\omega q/2+(2-\alpha\beta-\beta)\gamma q+c}+n^{(1+2\gamma)q+c}\bigr)
\] 
for a constant $c\geq 0$ independent of constants $\gamma$ and $q$. 
\end{Lem}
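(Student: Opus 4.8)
The plan is to combine the rewriting of \eqref{eq:xj-direct} in the equivalent form \eqref{eq:xj-direct-2}, which expresses $x_j$ as a sum over even $\ell$ with $q-j\leq\ell\leq q+j$ of a ``contraction'' of the symmetric difference product $f\oplus g$ against an intersection transform of $h$, with the two algorithmic building blocks already available: Lemma~\ref{lem:fsdp} for the symmetric difference products and Lemma~\ref{lem:fit} for the intersection transforms. Concretely, first I would note that the range $0\leq j\leq 2\gamma q$ forces every $\ell$ appearing in \eqref{eq:xj-direct-2} to satisfy $(1-2\gamma)q\leq q-j\leq\ell\leq q+j\leq(1+2\gamma)q$, so all the required symmetric difference products $f\oplus g$ are exactly those in the range covered by Lemma~\ref{lem:fsdp}; invoking that lemma once produces all of them within time $O\bigl(n^{\omega q/2+(2-\alpha\beta-\beta)\gamma q+c}+n^{(1+2\gamma)q+c}\bigr)$.

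Next I would handle the intersection transforms. For each even $\ell$ in the relevant range, the term $h\iota_{(q+\ell-j)/2}$ is the intersection transform of $h$ evaluated on $\ell$-subsets, so Lemma~\ref{lem:fit} with $s=\ell$ computes $h\iota_t(D)$ for all $D\in\binom{U}{\ell}$ and all $0\leq t\leq\ell$ simultaneously in time $O\bigl(n^{\max(\ell,q)+c}\bigr)$. Since $\ell\leq(1+2\gamma)q$, this is $O\bigl(n^{(1+2\gamma)q+c}\bigr)$ per value of $\ell$, and there are only $O(q)$ admissible values of $\ell$, so the whole batch costs $O\bigl(n^{(1+2\gamma)q+c}\bigr)$ (absorbing the $O(q)$ factor into the $n^{c}$ slack since $q$ is constant, or alternatively into $c$). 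Having both $(f\oplus g)(D)$ and $h\iota_{(q+\ell-j)/2}(D)$ in hand for every $D\in\binom{U}{\ell}$, the pointwise product and the sum over $D$ in \eqref{eq:xj-direct-2} take $O\bigl(n^{\ell+c}\bigr)\subseteq O\bigl(n^{(1+2\gamma)q+c}\bigr)$ ring operations; finally summing over the $O(q)$ values of $\ell$ and over the $O(\gamma q)$ values of $j$ with $0\leq j\leq 2\gamma q$ again contributes only constant factors. Adding up, the dominant terms are exactly $O\bigl(n^{\omega q/2+(2-\alpha\beta-\beta)\gamma q+c}\bigr)$ from the matrix multiplications and $O\bigl(n^{(1+2\gamma)q+c}\bigr)$ from the transforms and bookkeeping, matching the claimed bound.

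There is no single hard obstacle here — the lemma is essentially an assembly of Lemmas~\ref{lem:fit} and~\ref{lem:fsdp} through the identity \eqref{eq:xj-direct-2} — but the one point that needs genuine care is verifying that the index ranges line up: one must check that the parity constraint $\ell\equiv 0\pmod 2$ together with $q-j\leq\ell\leq q+j$ and $0\leq j\leq 2\gamma q$ keeps $\ell$ inside the window $(1-2\gamma)q\leq\ell\leq(1+2\gamma)q$ required by Lemma~\ref{lem:fsdp}, and that the transform order $t=(q+\ell-j)/2$ is a legitimate integer in $[0,\ell]$ (it is, since $j\equiv q+\ell\pmod 2$ and $q-j\leq\ell$ gives $t\geq 0$ while $\ell\leq q+j$ gives $t\leq\ell$). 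Once this is confirmed, the running time claim follows by bounding each of the $O(q)$ terms of \eqref{eq:xj-direct-2} separately and observing that $q$ (hence the number of terms) is constant, so all such factors disappear into the $+c$ in the exponent.
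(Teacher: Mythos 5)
Your proposal is correct and follows essentially the same route as the paper: rewrite the right-hand side via \eqref{eq:xj-direct-2}, observe that $0\leq j\leq 2\gamma q$ confines $\ell$ to $[(1-2\gamma)q,(1+2\gamma)q]$ so Lemma~\ref{lem:fsdp} supplies all symmetric difference products, obtain the intersection transforms of $h$ from Lemma~\ref{lem:fit}, and absorb the $O(q)$ bookkeeping factors since $q$ is constant. The only cosmetic difference is that you invoke Lemma~\ref{lem:fit} once per value of $\ell$ while the paper invokes it with $s=\lfloor(1+2\gamma)q\rfloor$; this does not change the argument or the bound (and note in your side check that $q-j\leq\ell$ is what gives $t\leq\ell$, while $t\geq 0$ follows from $j\leq q\leq q+\ell$ -- both bounds do hold).
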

\begin{proof}
Because $0\leq j\leq 2\gamma q+1$ we observe that
$(1-2\gamma)q-1\leq\ell\leq(1+2\gamma)q+1$ in \eqref{eq:xj-direct-2}. 
Using Lemma~\ref{lem:fsdp} we can evaluate $f\oplus g$ for
all required $\ell$ within the claimed time bound. 
Using Lemma~\ref{lem:fit} with 
$s=\lfloor(1+2\gamma)q+1\rfloor$ we can evaluate $h\iota_t$ for all 
$t\leq s$ in $O(n^{(1+2\gamma)q+c})$ time which is within the claimed time bound. Finally,
the sums in \eqref{eq:xj-direct-2} can be computed
in the claimed time bound by using the evaluations $f\oplus g$
and $h\iota_t$.
\end{proof}

\subsection{Running-time analysis}
\label{sect:runtime}

We now balance the running times from Lemma~\ref{lem:yi} 
and Lemma~\ref{lem:xj-direct} by selecting the value of 
$0\leq\gamma\leq 1/2$. Disregarding the constant $c$ which
is independent of $q$ and $\gamma$, the contribution
of Lemma~\ref{lem:yi} is $O\bigl(n^{(3/2-\gamma)q}\bigr)$ 
and the contribution of Lemma~\ref{lem:xj-direct} is 
\[
O\bigl(n^{\omega q/2+(2-\alpha\beta-\beta)\gamma q}+n^{(1+2\gamma)q}\bigr)\,.
\]
In particular, we must minimize the maximum of the three contributions
\[
\begin{split}
&O\bigl(n^{(3/2-\gamma)q}\bigr)\,,\\
&O\bigl(n^{\omega q/2+(2-\alpha\beta-\beta)\gamma q}\bigr)\,,\quad\text{and}\\
&O\bigl(n^{(1+2\gamma)q}\bigr)\,.
\end{split}
\]
We claim that if $\alpha\leq 1/2$ then the maximum is controlled by
\begin{equation}
\label{eq:gamma-bal}
\frac{3}{2}-\gamma=\frac{\omega}{2}+(2-\alpha\beta-\beta)\gamma\,.
\end{equation}
Let us select the value of $\gamma$ given by \eqref{eq:gamma-bal}.
Recalling that $\beta=(\omega-2)/(1-\alpha)$, we have
\begin{equation}
\label{eq:gamma}
\gamma
=\frac{3-\omega}{2(3-\alpha\beta-\beta)}
=\frac{(3-\omega)(1-\alpha)}{12-2(1+\omega)(1+\alpha)}\,.
\end{equation}
In \eqref{eq:gamma} we have $\gamma=1/6$ if and only if $\alpha=1/2$.
In particular, we have $\gamma\leq 1/6$ if $\alpha\leq 1/2$, implying
\[
\frac{3}{2}-\gamma\geq 1-2\gamma
\]
and thus \eqref{eq:gamma} and \eqref{eq:gamma-bal} determine the maximum
as claimed. In this case we can achieve running time
\[
\begin{split}
O\biggl(&n^{\bigl(\frac{3}{2}-\frac{(3-\omega)(1-\alpha)}{12-2(1+\omega)(1+\alpha)}\bigr)q+c}\biggr)\\
&=O\biggl(n^{3q\bigl(\frac{1}{2}-\frac{(3-\omega)(1-\alpha)}{36-6(1+\omega)(1+\alpha)}\bigr)+c}\biggl)\,.
\end{split}
\]
Conversely, if $\alpha\geq 1/2$ then the maximum is controlled by
\[
\frac{3}{2}-\gamma=1-2\gamma\,,
\]
in which case we select $\gamma=1/6$ and achieve running time
\[
O\biggr(n^{\bigl(\frac{3}{2}-\frac{1}{6}\bigr)q+c}\biggr)
=O\biggl(n^{3q\bigl(\frac{1}{2}-\frac{1}{18}\bigr)+c}\biggr)\,.
\]

Since the system \eqref{eq:sys} and its solution \eqref{eq:xj} are 
integer-valued and have bit-length bounded by a polynomial in $n$
that is independent of the constant $q$, for example Bareiss's
algorithm~\cite{B68} solves the constructed system in 
the claimed running time.
This completes the proof of Theorem~\ref{thm:main}.

\subsection{Speedup for $q=2,3,4$.}
\label{sect:q234}

In this section we prove Theorem~\ref{thm:q234}. We split the proof
into three parts.

\begin{proof}[Proof ($q=2$).]
Let us study the first family of equations (Lemma~\ref{lem:first}). 
For $q=2$ we have indeterminates $x_0,x_2,x_4,x_6$ and
equations indexed by $i=0,1,2,3$, where equation $i$ can be 
constructed in $O(n^{\max(i,q)})$ arithmetic operations;
cf.~Lemma~\ref{lem:fit} to Lemma~\ref{lem:yi}.
Thus, it suffices to replace the equation for $i=3$ with an 
equation independent of the equations $i=0,1,2$ to solve
for all the indeterminates, and in particular for $x_6$, which
gives the weighted disjoint triples. Our strategy is to solve 
directly for the indeterminate $x_0$. We observe that $x_0$ requires
to sum over all triples $(A,B,C)$ of $q$-subsets such that
the $q$-uniform hypergraph $\{A,B,C\}$ has no vertices of 
odd degree. 
Up to isomorphism the only such hypergraph
for $q=2$ is the triangle.
From now on we abuse the notation slightly and extend the domains of the functions $f$, $g$ and $h$ to sets of size {\em at most $q$} so that they evaluate to $0$ for sets of size strictly smaller than $q$.
Accordingly, we have
\begin{equation}
\label{eq:2-uniform-x0} 
x_0=\sum_{1\leq p,r\leq n}f\bigl(\{p,r\}\bigr)\sum_{s=1}^ng\bigl(\{p,s\}\bigr)h\bigl(\{r,s\}\bigr)\,,
\end{equation}

where we can evaluate the inner sum simultaneously for all $p,r$ 
by multiplying two $n\times n$ matrices using $O(n^{\omega})$
arithmetic operations.
\end{proof}

\begin{proof}[Proof ($q=3$).]
Let us imitate the proof for $q=2$.
For $q=3$ our indeterminates are $x_1,x_3,x_5,x_7,x_9$, and the
equations are indexed by $i=0,1,2,3,4$. Again it suffices to replace
the $i=4$ equation. We will do this by solving directly for
the indeterminate $x_1$. For $q=3$ there are, up to isomorphism, 
exactly two $q$-uniform hypergraphs $\{A,B,C\}$ with a unique vertex $p$
of odd degree. 
In Type I hypergraphs $p$ is of degree $3$ and in Type II hypergraphs $p$ is of degree $1$. 
Let $x_{1,\mathrm{I}}$ and $x_{1,\mathrm{II}}$ denote the contribution to $x_1$ of triples $(A,B,C)$ corresponding to Type I and Type II hypergraphs, respectively.
Then $x_1=x_{1,\mathrm{I}}+x_{1,\mathrm{II}}$.

Note that for every Type I hypergraph the hypergraph $\{A\setminus\{p\},B\setminus\{p\},C\setminus\{p\}\}$ is 2-uniform and has no odd vertices.
Hence the contribution $x_{1,\mathrm{I},p}$ of Type I triples such that $p\in A\cap B\cap C$ can be computed in time $O(n^{\omega})$ by applying the formula~\eqref{eq:2-uniform-x0} to functions $f_p,g_p,h_p:{U\choose 2}\rightarrow \mathbb{Z}$, where $f_p(X)=f(\{p\}\cup X)$, $g_p(X)=g(\{p\}\cup X)$ and $h_p(X)=h(\{p\}\cup X)$. (Note that we use here the fact that $f$, $g$ and $h$ evaluate to $0$ for sets with less than 3 elements.)
Since \[x_{1,\mathrm{I}} = \sum_{p=1}^n x_{1,\mathrm{I},p},\]
the value $x_{1,\mathrm{I}}$ can be computed in $O(n^{\omega+1})$ time.

\begin{figure}[t]
\[
\mathrm{I}:\ 
\begin{array}{|@{\,}c@{\,}|@{\,}c@{\,}|}
\hline
p&rst\\
\hline
1&110\\
1&101\\
1&011\\
\hline
\end{array}\,,
\quad
\mathrm{II}:\ 
\begin{array}{|@{\,}c@{\,}|@{\,}c@{\,}|@{\,}c@{\,}|}
\hline
p&rs&tu\\
\hline
1&11&00\\\hline
0&10&11\\
0&01&11\\
\hline
\end{array}\,.
\]
\caption{\label{fig:3-uniform}Two nonisomorphic types of 3-uniform hypergraphs with one vertex of odd degree.
We display these hypergraphs below as incidence matrices
where the rows correspond to hyperedges and the columns correspond
to vertices, with a 1-entry indicating indidence and a 0-entry 
indicating non-indidence between a hyperedge and a vertex.
Vertical and horizontal lines to partition the vertices and 
the hyperedges to orbits with respect to the action of the automorphism 
group of the hypergraph.}
\end{figure}

Now consider Type II hypergraphs. 
Let us further partition Type II triples $(A,B,C)$ according to which of the sets $A$, $B$, $C$ contains $p$.
Let $z_{1,\mathrm{II}}^{f|g,h}$ denote the contribution to $x_1$ of Type II triples where $p\in A$.
Note that then $z_{1,\mathrm{II}}^{g|f,h}$ and $z_{1,\mathrm{II}}^{h|f,g}$ are the contributions of Type II triples where $p\in B$ and $p\in C$, respectively, and hence 
\begin{equation}
\label{eq:3-uniform-type-II}
 x_{1,\mathrm{II}}=z_{1,\mathrm{II}}^{f|g,h}+z_{1,\mathrm{II}}^{g|f,h}+z_{1,\mathrm{II}}^{h|f,g}.
\end{equation}

Let us focus on $z_{1,\mathrm{II}}^{f|g,h}$, i.e.\ assume that $A=\{p,r,s\}$ for some $r,s\in U$.
Since $r$ and $s$ are of degree $2$, either both are in one of the remaining sets, say $B$, or each of $r$, $s$ is in exactly one of $B$ and $C$.
However we can assume the latter, because in the former $C$ has at least two degree $1$ vertices.
So let $r$ be the vertex of $A\cap B$ and let $s$ be the vertex of $A\cap C$.
Since the remaining vertices in $B\cup C$ are of degree 2, there are exactly two of them, say, $t$ and $u$, and $\{t,u\}\in B\cap C$, see Fig~\ref{fig:3-uniform}. It follows that 

\[
\begin{split}
z_{1,\mathrm{II}}^{f|g,h}
=& 
\sum_{1\leq p,r,s\leq n}
\sum_{\substack{1\leq t<u\leq n\\p\not\in\{t,u\}}}
f\bigl(\{p,r,s\}\bigr)
g\bigl(\{r,t,u\}\bigr)
h\bigl(\{s,t,u\}\bigr)\\
=&
\sum_{1\leq p,r,s\leq n}
f\bigl(\{p,r,s\}\bigr)
\biggl(\sum_{1\leq t<u\leq n}
g\bigl(\{r,t,u\}\bigr)
h\bigl(\{s,t,u\}\bigr)
\biggr) \ -\\
& 
\underbrace{
\sum_{1\leq p,r,s\leq n}
\sum_{1\leq t\leq n}
f\bigl(\{p,r,s\}\bigr)
g\bigl(\{p,r,t\}\bigr)
h\bigl(\{p,s,t\}\bigr)}_{x_{1,\mathrm{I}}}\,.
\end{split}
\]
Note that it is sufficient to assume only $p\not\in\{t,u\}$; indeed, since $f$, $g$ and $h$ evaluate to $0$ for sets with less than 3 elements any choice of $p$, $r$, $s$, $t$, $u$ which satisfies this assumption but $|\{p,r,s,t,u\}|<5$ produces a zero term in the sum.
Here the sum $\sum_{1\leq p,r,s\leq n}
f\bigl(\{p,r,s\}\bigr)
\biggl(\sum_{1\leq t<u\leq n}
g\bigl(\{r,t,u\}\bigr)
h\bigl(\{s,t,u\}\bigr)
\biggr)$ can be evaluated with an $n\times n^2$ by 
$n^2\times n$ rectangular matrix multiplication in $O(n^{1+\omega})$
arithmetic operations; cf.~(M2). 
Hence it takes $O(n^{1+\omega})$ time to compute $z_{1,\mathrm{II}}^{f|g,h}$, since we have shown that $x_{1,\mathrm{I}}$ can also be computed within this time bound.
\end{proof}

\begin{proof}[Proof ($q=4$).]
Let us imitate the proof for $q=3$.
For $q=4$ our indeterminates are $x_0,x_2,x_4,x_6,x_8,x_{10},x_{12}$, 
and the equations are indexed by $i=0,1,2,3,4,5,6$. 
It suffices to replace the $i=5$ and $i=6$ equations. 
We will do this by solving directly for
the indeterminates $x_0$ and $x_2$, that is, the cases $j=0$ and $j=2$
for $j=|A\oplus B\oplus C|$.

{\em The case $j=0$.}
For $q=4$ there is, up to isomorphism, a unique 
$q$-uniform hypergraph $\{A,B,C\}$ with no vertex of odd degree:
\[
\begin{array}{|@{\,}c@{\,}|}
\hline
111100\\
110011\\
001111\\
\hline
\end{array}\,.
\]
Accordingly, we have
\[
x_0=\sum_{1\leq p<q\leq n}\sum_{1\leq r<s\leq n}f\bigl(\{p,q,r,s\}\bigr)\sum_{1\leq t<u\leq n}g\bigl(\{p,q,t,u\}\bigr)h\bigl(\{r,s,t,u\}\bigr)\,,
\]
where we can evaluate the inner sum simultaneously for all $p,q,r,s$ 
by multiplying two $n^2\times n^2$ matrices. This takes $O(n^{2\omega})$
arithmetic operations.

{\em The case $j=2$.}
For $q=4$ we will show that there are, up to isomorphism, exactly four 
$q$-uniform hypergraphs $\{A,B,C\}$ with exactly two
vertices $p,r$ of odd degree (see Fig~\ref{fig:4-uniform-j=2}).
For $t\in \{\mathrm{I},\mathrm{II},\mathrm{III},\mathrm{IV}\}$ let $x_{2,t}$ denote the contribution to $x_2$ of triples $(A,B,C)$ such that the corresponding hypergraph is of type $t$.

\begin{figure}[t]
\[
\mathrm{I}:\ 
\begin{array}{|@{\,}c@{\,}|@{\,}c@{\,}|}
\hline
pr&stu\\\hline
11&110\\
11&101\\
11&011\\
\hline
\end{array}\,,
\quad
\mathrm{II}:\ 
\begin{array}{|@{\,}c@{\,}|@{\,}c@{\,}|@{\,}c@{\,}|@{\,}c@{\,}|}
\hline
p&r&st&uv\\\hline
1&1&11&00\\\hline
1&0&10&11\\
1&0&01&11\\
\hline
\end{array}\,,
\quad
\mathrm{III}:\ 
\begin{array}{|@{\,}c@{\,}|@{\,}c@{\,}|@{\,}c@{\,}|}
\hline
pr&st&uvw\\\hline
11&11&000\\\hline
00&10&111\\
00&01&111\\
\hline
\end{array}\,,
\quad
\mathrm{IV}:\ 
\begin{array}{|@{\,}c@{\,}|@{\,}c@{\,}|@{\,}c@{\,}|}
\hline
pr&stuv&w\\\hline
10&1100&1\\
01&0011&1\\\hline
00&1111&0\\
\hline
\end{array}\,.
\]
\caption{\label{fig:4-uniform-j=2}Four nonisomorphic $4$-uniform hypergraphs with exactly two vertices of odd degree.}
\end{figure}

In hypergraphs of Type I both odd degree vertices $p$ and $r$ are of degree 3.
Then $\{A\setminus\{p,r\},B\setminus\{p,r\},C\setminus\{p,r\}\}$ is 2-uniform and has no odd vertices.
Hence the contribution $x_{2,\mathrm{I},p,r}$ of Type I triples such that both $p$ and $r$ are of degree $3$ can be computed in time $O(n^{\omega})$ by applying the formula~\eqref{eq:2-uniform-x0} to functions $f_{pr},g_{pr},h_{pr}:{U\choose 2}\rightarrow \mathbb{Z}$, where $f_{pr}(X)=f(\{p,r\}\cup X)$, $g_{pr}(X)=g(\{p,r\}\cup X)$ and $h_{pr}(X)=h(\{p,r\}\cup X)$. 
Since \[x_{2,\mathrm{I}} = \sum_{1\le p<r\le n} x_{2,\mathrm{I},p,r},\]
the value $x_{2,\mathrm{I}}$ can be computed in $O(n^{\omega+2})$ time.

In hypergraphs of Type II there is one vertex $p$ of degree $3$ and one vertex $r$ of degree $1$.
Then $\{A\setminus\{p\},B\setminus\{p\},C\setminus\{p\}\}$ is 3-uniform and has exactly one odd degree vertex, in fact a degree $1$ vertex.
Hence the contribution $x_{2,\mathrm{II},p}$ of Type II triples such that $p$ is of degree $3$ can be computed in time $O(n^{\omega+1})$ by applying the formula~\eqref{eq:3-uniform-type-II} to functions $f_{p},g_{p},h_{p}:{U\choose 3}\rightarrow \mathbb{Z}$, where $f_{p}(X)=f(\{p\}\cup X)$, $g_p(X)=g(\{p\}\cup X)$ and $h_p(X)=h(\{p\}\cup X)$. 
Since \[x_{2,\mathrm{II}} = \sum_{1\le p\le n} x_{2,\mathrm{II},p},\]
the value $x_{2,\mathrm{II}}$ can be computed in $O(n^{\omega+2})$ time.
Note also that by the same reasoning the contribution $z_{2,\mathrm{II}}^{f|g,h}$ of Type II triples $(A,B,C)$ such that the degree 1 vertex belongs to $A$ is equal to $\sum_{1\le p\le n} z_{1,\mathrm{II}}^{f_p|g_p,h_p}$, hence it also can be computed in $O(n^{\omega+2})$ time. (We will use this quantity while computing $x_{2,\mathrm{III}}$ and $x_{2,\mathrm{IV}}$)

In the remaining hypergraphs both $p$ and $r$ are of degree $1$, but we have two nonisomorphic types of such hypergraphs.
In Type III hypergraphs both $p$ and $r$ are in the same hyperedge.
Let $z_{2,\mathrm{III}}^{f|g,h}$ denote the contribution to $x_2$ of Type III triples where $p,r\in A$.
Note that then $z_{2,\mathrm{III}}^{g|f,h}$ and $z_{2,\mathrm{III}}^{h|f,g}$ are the contributions of Type III triples where $p,r\in B$ and $p,r\in C$, respectively, and hence 
\begin{equation}
\label{eq:4-uniform-type-III}
 x_{2,\mathrm{III}}=z_{2,\mathrm{III}}^{f|g,h}+z_{2,\mathrm{III}}^{g|f,h}+z_{2,\mathrm{III}}^{h|f,g}.
\end{equation}
We focus on $z_{2,\mathrm{III}}^{f|g,h}$, i.e.\ we assume $A=\{p,r,s,t\}$ for some vertices $s$ and $t$ such that $|A|=4$.
Then since the remaining vertices are all of degree $2$, the remaining sets in the triple are $B=\{s,u,v,w\}$ and $C=\{t,u,v,w\}$ for some three different vertices $u$, $v$, $w$ outside $A$. Then,

\[
\begin{split}
z_{2,\mathrm{III}}^{f|g,h}
=& 
\sum_{1\leq p<r\leq n}
\sum_{1\leq s,t\leq n}
\sum_{\substack{1\leq u<v<w\leq n\\\{p,r\}\cap\{u,v,w\}=\emptyset}}
f\bigl(\{p,r,s,t\}\bigr)
g\bigl(\{s,u,v,w\}\bigr)
h\bigl(\{t,u,v,w\}\bigr)\\
=&
\underbrace{\sum_{1\leq p<r\leq n}
\sum_{1\leq s,t\leq n}
f\bigl(\{p,r,s,t\}\bigr)
\biggl(\sum_{1\leq u<v<w\leq n}
g\bigl(\{s,u,v,w\}\bigr)
h\bigl(\{t,u,v,w\}\bigr)
\biggr)}_{(*)} \ -\\
& 
\underbrace{
\sum_{1\leq p<r\leq n}
\sum_{1\leq s,t\leq n}
\sum_{\substack{1\leq u<v<w\leq n\\\{p,r\}\cap\{u,v,w\}\ne\emptyset}}
f\bigl(\{p,r,s,t\}\bigr)
g\bigl(\{s,u,v,w\}\bigr)
h\bigl(\{t,u,v,w\}\bigr)}_{o^{f|g,h}}\,,
\end{split}
\]
where $o^{f|g,h}$ is an overcount which we specify in a moment.
Note that it is sufficient to assume only $\{p,r\}\cap\{u,v,w\}=\emptyset$; indeed, since $f$, $g$ and $h$ evaluate to $0$ for sets with less than 4 elements any choice of $p$, $r$, $s$, $t$, $u$, $v$, $w$ which satisfies this assumption but $|\{p,r,s,t,u,v,w\}|<7$ produces a zero term in the sum.
Observe also that the sum $\sum_{1\leq u<v<w\leq n}
g\bigl(\{s,u,v,w\}\bigr)
h\bigl(\{t,u,v,w\}\bigr)$ can be evaluated for each $s,t$ 
with an $n\times n^3$ by $n^3\times n$ rectangular matrix multiplication
in $O(n^{2+\omega})$ arithmetic operations; cf.~(M2).
Once these values are tabulated for every $s,t$, the sum $(*)$ can be evaluated in $O(n^4)\subseteq O(n^{2+\omega})$ time.
It remains to compute the value of overcount $o^{f|g,h}$ efficiently.
This can be split as $o^{f|g,h}=o^{f|g,h}_1+o^{f|g,h}_2$, where $o^{f|g,h}_1$ and $o^{f|g,h}_2$ denote the contribuitions of the terms where $|\{p,r\}\cap\{u,v,w\}|=1$ and $|\{p,r\}\cap\{u,v,w\}|=2$, respectively.

Let us focus on $o^{f|g,h}_1$. In these triples either $p\in\{u,v,w\}$ or $r\in\{u,v,w\}$, but not both (recall that $p<r$).
Since $u$, $v$, and $w$ are symmetric, we can assume that either $p=w$ or $r=w$.
In other words either $A=\{p,r,s,t\}$, $B=\{p,s,u,v\}$ and $C=\{p,t,u,v\}$ or $A=\{p,r,s,t\}$, $B=\{r,s,u,v\}$ and $C=\{r,t,u,v\}$.
Equivalently, we can drop the assumption $p<r$ and just consider all triples $A=\{p,r,s,t\}$, $B=\{p,s,u,v\}$ and $C=\{p,t,u,v\}$. 
It follows that

\[o^{f|g,h}_1 = \sum_{1\leq p,r,s,t\leq n}
\sum_{\substack{1\leq u<v\leq n\\\{p,r\}\cap\{u,v\}=\emptyset}}
f\bigl(\{p,r,s,t\}\bigr)
g\bigl(\{p,s,u,v\}\bigr)
h\bigl(\{p,t,u,v\}\bigr).\]

Observe that $o^{f|g,h}_1$ is equal to the contribution of Type II triples $(A,B,C)$ such that the unique degree $1$ vertex is in $A$, i.e.\ 
$o^{f|g,h}_1 = z_{2,\mathrm{II}}^{f|g,h}$, and we know how to compute this value in time $O(n^{\omega+2})$.

Now consider $o^{f|g,h}_2$. In these triples $\{p,r\}\subseteq\{u,v,w\}$.
Since $u$, $v$ and $w$ are symmetric, we can assume $\{p,r\}=\{v,w\}$.
It means that we count triples of the form $A=\{p,r,s,t\}$, $B=\{p,r,s,u\}$, and $C=\{p,r,t,u\}$, for every five different vertices $p<r$ and $s,t,u$.
It follows that $o^{f|g,h}_2=x_{2,\mathrm{I}}$, which is computable in $O(n^{\omega+2})$ time.

Finally we focus on Type IV triples, where there are two degree $1$ vertices $p$ and $r$, the remaining vertices are of degree 2 and $p$ and $r$ are in different hyperedges. Let $z_{2,\mathrm{IV}}^{f,g|h}$ denote the contribution to $x_2$ of Type IV triples where $p,r\in A\cup B$.
Note that then $z_{2,\mathrm{IV}}^{f,h|g}$ and $z_{2,\mathrm{IV}}^{g,h|f}$ are the contributions of Type IV triples where $p,r\in A\cup C$ and $p,r\in B\cup C$, respectively, and hence 
\begin{equation}
\label{eq:4-uniform-type-IV}
 x_{2,\mathrm{IV}}=z_{2,\mathrm{IV}}^{f,g|h}+z_{2,\mathrm{IV}}^{f,h|g}+z_{2,\mathrm{IV}}^{g,h|f}.
\end{equation}
We focus on $z_{2,\mathrm{IV}}^{f,g|h}$, i.e.\ we can assume $p\in A$ and $r\in B$. 
Then $C=\{s,t,u,v\}$ for some four different vertices $s$, $t$, $u$, $v$ diffrent from $p$ and $r$. Since all vertices except for $p$ and $r$ are of degree 2 from the handshaking lemma the number of vertices $k$ in the hypergraph satisfies $1\cdot 2+2\cdot (k-2)=4\cdot 3$, and hence $k=7$, i.e.\ there is exactly one vertex more, call it $w$.
Since $w$ is of degree 2, $w\in A\cap B$. Since all vertices in $C$ are of degree 2, two of them, say $s,t$ are in $A$, and the other two, say $u,v$ in $B$. Then,

\[
\begin{split}
z_{2,\mathrm{IV}}^{f,g|h}
=& 
\sum_{1\leq p\ne r\leq n}
\sum_{1\leq w\leq n}
\sum_{\substack{1\leq s<t\leq n\\r\not\in\{s,t\}}}
\sum_{\substack{1\leq u<v\leq n\\p\not\in\{u,v\}}}
f\bigl(\{p,w,s,t\}\bigr)
g\bigl(\{r,w,u,v\}\bigr)
h\bigl(\{s,t,u,v\}\bigr)\\
=&
\underbrace{\sum_{1\leq p,w\leq n}
\sum_{\substack{1\leq u<v\leq n\\p\not\in\{u,v\}}}
\biggl(\sum_{1\leq s<t\leq n}
f\bigl(\{p,w,s,t\}\bigr)
h\bigl(\{s,t,u,v\}\bigr)
\biggr)
\sum_{\substack{1\le r\le n\\ r\ne p}}g\bigl(r,w,u,v\bigr)}_{(**)} \ -\\
& 
\underbrace{
\sum_{1\leq p\ne r\leq n}
\sum_{1\leq w\leq n}
\sum_{\substack{1\leq s<t\leq n\\r\in\{s,t\}}}
\sum_{\substack{1\leq u<v\leq n\\p\not\in\{u,v\}}}
f\bigl(\{p,w,s,t\}\bigr)
g\bigl(\{r,w,u,v\}\bigr)
h\bigl(\{s,t,u,v\}\bigr)}_{o^{f,g|h}}\,,
\end{split}
\]
where $o^{f,g|h}$ is an overcount which we specify in a moment.
Note that by similar arguments as before, it is sufficient to assume only $\{p,r\}\cap\{u,v,w\}=\emptyset$.
Observe also that the sum $s_{p,w,u,v}=\sum_{1\leq s<t\leq n}
f\bigl(\{p,w,s,t\}\bigr)
h\bigl(\{s,t,u,v\}\bigr)$ can be evaluated for each $p,w,u,v$ 
by multiplying two $n^2\times n^2$ matrices in $O(n^{2\omega})$ arithmetic operations.
Moreover, for every $u,v,w$ we compute and store the sum $s_{u,v,w}=\sum_{1\le r\le n}g\bigl(r,u,v,w\bigr)$; this takes time $O(n^4)\subseteq O(n^{2\omega})$ . 
By noting that $\sum_{\substack{1\le r\le n\\ r\ne p}}g\bigl(r,u,v,w\bigr)=s_{u,v,w}-g\bigl(p,u,v,w\bigr)$ it follows that once all the values of $s_{p,w,u,v}$ and $s_{u,v,w}$ are computed the sum $(**)$ can be evaluated in $O(n^4)\subseteq O(n^{2+\omega})$ time.
It remains to compute the value of overcount $o^{f,g|h}$ efficiently.

Observe that $o^{f,g|h}$ is equal to the co ntribution of Type IV triples considered above where $r\in\{s,t\}$, i.e.\ the triples $(A,B,C)$ such that $A=\{p,r,s,w\}$, $B=\{r,w,u,v\}$ and $C=\{r,s,u,v\}$. These are exactly the Type II triples such that the degree $1$ vertex is in $A$.
Hence, $o^{f,g|h}=z_{2,\mathrm{II}}^{f|g,h}$, and we know how to compute this value in time $O(n^{\omega+2})$.
This finishes the proof of Theorem~\ref{thm:q234}.
\end{proof}

\section{Counting thin subgraphs in three parts}
\label{sect:homomorphims-in-three-parts}

This section proves Theorem~\ref{thm:subgraph} by relying on the techniques 
in \S4 of Fomin~{\em et al.}~\cite{FLRRS12} and invoking our 
Theorem~\ref{thm:main} as a subroutine that enables 
fast counting of injective homomorphisms in three parts.
Whereas Fomin~{\em et al.}~\cite{FLRRS12}
use the path decomposition to split $P$ into two halves of size roughly 
$k/2$ joined by a separator of size at most $p$, we split $P$ into 
a sequence of three parts of size roughly $k/3$ joined by two separators 
of size at most $p$. Accordingly, the following lemma is an immediate analog 
of Proposition 2 in Fomin~{\em et al.}~\cite{FLRRS12} (cf.~\cite{K92}).

\begin{Lem}
Let $P$ be a graph with $k$ vertices and pathwidth $p$. Then, we can partition
the vertices of $P$ into five pairwise disjoint sets $L,S,M,T,R$ such that
(i) $|L|,|M|,|R|\leq k/3$, (ii) $|S|,|T|\leq p$, and (iii) every edge of $P$
joins vertices in one or two consecutive sets in the sequence $L,S,M,T,R$. 
\end{Lem}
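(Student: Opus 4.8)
The plan is to prove this by a standard sliding-window argument applied to an optimal path decomposition of $P$. First I would fix a path decomposition $(X_1, X_2, \ldots, X_r)$ of $P$ of width $p$, so that each bag $X_i$ has size at most $p+1$ and for every vertex $v$ the set of bags containing $v$ forms a contiguous interval, and every edge is contained in some bag. Without loss of generality I may assume $r = k$ and that consecutive bags differ by the insertion or deletion of a single vertex (a ``nice'' path decomposition), so that each vertex of $P$ is ``introduced'' at a well-defined bag; this lets me speak of the vertices in terms of the order in which they first appear.

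The key step is to choose two cut indices $1 \le a \le b \le r$ that split the decomposition into three consecutive blocks, and then set $S = X_a$, $T = X_b$, and define $L$ (resp. $M$, resp. $R$) to be the vertices of $P$ that appear only in bags with index $< a$ (resp. strictly between $a$ and $b$, resp. $> b$) and are not already placed in $S$ or $T$. Concretely, $L = \bigl(\bigcup_{i<a} X_i\bigr) \setminus (S \cup T)$, $M = \bigl(\bigcup_{a<i<b} X_i\bigr)\setminus(S\cup T)$, and $R = \bigl(\bigcup_{i>b} X_i\bigr)\setminus(S\cup T)$. Because the bags containing any fixed vertex form an interval, every vertex of $P$ lands in exactly one of the five sets, so $L,S,M,T,R$ are pairwise disjoint and cover $V(P)$; properties (ii) and (iii) are then immediate: $|S|,|T|\le p+1$ (absorbing the $+1$ into the statement's ``$\le p$'' by the usual convention, or by noting one can decrease a bag by one), and any edge $uv$ lies in a common bag $X_i$, whose index is in one of the three blocks or straddles a cut, which forces $u,v$ to lie in one or two consecutive sets of the sequence $L,S,M,T,R$.

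The main obstacle — and the only place real work is needed — is establishing property (i), that the cut indices $a,b$ can be chosen so that each of $|L|,|M|,|R|$ is at most $k/3$. The argument is the discrete intermediate-value / averaging trick used in Proposition~2 of Fomin~{\em et al.}~\cite{FLRRS12}: consider the monotone nondecreasing step function $i \mapsto \bigl|\bigcup_{i' \le i} X_{i'}\bigr|$, which starts below $k/3$ and ends at $k$, and which increases by at most one at each step (in a nice decomposition each step introduces at most one new vertex). Hence there is an index $a$ at which the count of vertices seen so far first reaches roughly $k/3$, and $|L| \le k/3$; removing those vertices and sliding on, one finds a second index $b$ at which the count of vertices not in $L\cup S$ first reaches roughly $k/3$, giving $|M|\le k/3$, and then $|R| = k - |L\cup S\cup M\cup T| \le k/3$ follows because the total is $k$ and the other two blocks each carry about a third. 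Care is needed with the floor/ceiling bookkeeping and with the vertices that sit in $S$ or $T$ (they are subtracted off, which only helps the bounds), but this is routine once the step-size-one property of a nice path decomposition is in hand; I would present it as the one-dimensional pigeonhole observation that a function on $\{0,1,\ldots,r\}$ taking unit steps from $0$ to $k$ hits every threshold, applied twice.
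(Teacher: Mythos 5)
Your overall route---fix a nice path decomposition, cut it twice, let the two cut bags play the roles of $S$ and $T$, and use a unit-step averaging argument to balance $|L|,|M|,|R|$---is exactly the argument the paper has in mind (the paper gives no proof at all, only the pointer to Proposition~2 of Fomin~{\em et al.}~\cite{FLRRS12} and to~\cite{K92}), and the averaging step for (i) is fine. But two steps you declare immediate are genuinely broken. First, $S=X_a$ and $T=X_b$ need not be disjoint: any vertex whose interval of bags contains both $a$ and $b$ (for instance the centre of a star) lies in both bags, so your claim that ``every vertex lands in exactly one of the five sets'' is false and the five sets do not form a partition. The natural repair, say $T=X_b\setminus X_a$, then breaks (iii): a vertex of $X_a\cap X_b$ placed in $S$ may share a bag of index greater than $b$ with a vertex of $R$, hence be adjacent to it, and $S,R$ are not consecutive. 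Second, a bag of a width-$p$ decomposition has up to $p+1$ vertices; ``absorbing the $+1$ by convention'' or ``decreasing a bag by one'' is not an argument. The legitimate fixes are to take $S=X_a\cap X_{a+1}$ for two consecutive distinct bags (size at most $p$), or to work with a linear layout realizing the vertex separation number, which is what the reference to~\cite{K92} is for.

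The first gap is not just a defect of your particular construction: with $S$ and $T$ literally disjoint and (iii) read as stated, the statement cannot be proved, because it fails for stars. Take $P=K_{1,9}$, so $k=10$ and the pathwidth is $1$: the centre is adjacent to all nine leaves, so every vertex of $P$ must lie in the union of the centre's set and the at most two sets consecutive to it in the sequence $L,S,M,T,R$; that union has size at most $2\lceil k/3\rceil+1=9<10$ wherever the centre is placed. So no choice of cuts rescues the literal statement; what can be proved---and what the counting argument in \S4 actually uses---is the relaxed version in which $S$ and $T$ are allowed to intersect, equivalently in which (iii) is replaced by ``every edge of $P$ lies inside $L\cup S$, $S\cup M\cup T$, or $T\cup R$''. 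Your cut-bag construction (with cut intersections to get size at most $p$) proves exactly that relaxed statement once the disjointness claim is dropped, and the $O(n^{2p})$ guessing step is unaffected since one enumerates injective maps of $S\cup T$. As written, however, the proposal asserts a false disjointness claim and establishes (ii) only with $p+1$ in place of $p$, so it does not prove the lemma as stated.
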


Imitating the design in \S4 of Fomin~{\em et al.}~\cite{FLRRS12},
we now iterate over all possible $O(n^{2p})$
guesses $\varphi$ how an injective homomorphism from $P$ to $H$ can map 
the elements of the disjoint sets $S$ and $T$ to $V(H)$. For each such 
guess $\varphi$, we use the algorithm in Lemma 2 
of Fomin~{\em et al.}~\cite{FLRRS12} to compute 
for each $A,B,C\subseteq V(H)\setminus (\varphi(S)\cup \varphi(T))$ 
with $|A|,|B|,|C|\leq k/3$ the following three quantities:
(a) the number $f_\varphi(A)$ of injective homomorphisms 
from $P[L\cup S]$ to $H[A\cup \varphi(S)]$ that extend $\varphi$, 
(b) the number $g_\varphi(B)$ of injective homomorphisms 
from $P[S\cup M\cup T]$ to $H[\varphi(S)\cup B\cup\varphi(T)]$ that extend $\varphi$, 
and
(c) the number $h_\varphi(C)$ of injective homomorphisms 
from $P[T\cup R]$ to $H[\varphi(T)\cup C]$ that extend $\varphi$.
This takes $O\bigl(n^{k/3+3p+c}\bigr)$ time for a constant $c\geq 0$ independent
of the constants $k$ and $p$; in particular the running-time bottleneck 
occurs with the functions $g_\varphi$ where we run an $O(n^{p+c})$-time
algorithm of D{\'{\i}}az~{\em et al.}~\cite{DST02} to compute the number
of homomorphisms from $P[S\cup M\cup T]$ to $H[\varphi(S)\cup B\cup\varphi(T)]$
that extend $\varphi$ for each of the $O(n^{2p+k/3})$ possibilities for 
$\varphi(S)\cup B\cup\varphi(T)$.
 
Using the algorithm in Theorem~\ref{thm:main} for each guess $\varphi$, 
we obtain the number of injective homomorphisms from $P$ to $H$ as 
$\sum_{\varphi} \Delta\bigl(f_\varphi,g_\varphi,h_\varphi\bigr)$ in time
$O\bigl(n^{(\frac{1}{2}-\tau)k+2p+c}\bigr)$. Dividing by the number
of automorphisms of $P$, we obtain the number of subgraphs isomorphic 
to $P$ in $H$ (cf.~\cite[Theorem~2]{FLRRS12}). This completes the proof of 
Theorem~\ref{thm:subgraph}.

\section{Counting set packings in three parts}
\label{sect:packings-in-three-parts}

This section proves Theorem~\ref{thm:packings}. 
Let $U$ be the $n$-element universe and let 
$\mathcal{F}\subseteq\binom{U}{s}$ be a set of $s$-element subsets of $U$
given as input. A further input is the integer $t\equiv 0\pmod 3$.
Our task is to count the number of 
$t$-tuples $(S_1,S_2,\ldots,S_t)\in\mathcal{F}^t$ that
are pairwise disjoint, that is, $S_i\cap S_j$ holds for all 
$1\leq i<j\leq t$.

The structure of the proof is to rely on standard dynamic programming 
techniques to execute the count for pairwise disjoint $t/3$-tuples, and 
then invoke the weighted disjoint triples algorithm (Theorem~\ref{thm:main}) 
to arrive at the desired count. Let us now proceed with the details.

We start by defining a sequence of functions 
$f_\ell:\binom{U}{\ell s}\rightarrow\mathbb{Z}$ that we will
then compute using dynamic programming.
For $\ell=1,2,\ldots,t/3$ and all $X\in\binom{U}{\ell s}$, let
$f_\ell(X)$ be the number of $\ell$-tuples 
$(S_1,S_2,\ldots,S_\ell)\in\mathcal{F}^\ell$ that (a) are pairwise disjoint,
and (b) satisfy $X=S_1\cup S_2\cup\cdots\cup S_\ell$.

To set up a base case for the dynamic programming, we observe 
that $f_1:\binom{U}{s}\rightarrow\mathbb{Z}$ is the indicator function 
for the subsets in $\mathcal{F}$. That is, $f_1(X)=1$ if and only if
$X\in\mathcal{F}$, and $f_1(X)=0$ otherwise. Since 
$|\mathcal{F}|\leq\binom{n}{s}$ and $s$ is a constant independent of $n$, 
we have that $f_1$ can be computed in time $O(n^{s+c})$.
Next, suppose that we have computed $f_{\ell-1}$ and want to
compute $f_{\ell}$. For each $X\in\binom{U}{\ell s}$, we use the 
following recurrence:
\[
f_\ell(X)=\sum_{Y\in\binom{X}{s}\cap\mathcal{F}}f_{\ell-1}(X\setminus Y)\,.
\]
To see that the recurrence is correct, observe that 
for every $(S_1,S_2,\ldots,S_\ell)\in\mathcal{F}^\ell$
that is pairwise disjoint with $X=S_1\cup S_2\cup\cdots\cup S_\ell$
there is a unique $Y\in\mathcal{F}\cap\binom{X}{s}$ such that
$(S_1,S_2,\ldots,S_{\ell-1})\in\mathcal{F}^{\ell-1}$
is pairwise disjoint with 
$X\setminus Y=S_1\cup S_2\cup\cdots\cup S_{\ell-1}$, namely $Y=S_{\ell}$.
In particular, the left-hand side and the right-hand side of the 
recurrence count the same $\ell$-tuples. 

To obtain the running time of the recurrence, observe that we
iterate over all $X\in\binom{U}{\ell s}$ and then over all
$Y\in\binom{X}{s}$, checking for each $Y$ whether $Y\in\mathcal{F}$.
Since both $s$ and $t$ are constants independent of $n$, 
also $\ell$ is a constant independent of $n$, and the running 
time bound becomes $O(n^{s\ell+c}(s\ell)^s)=O(n^{s\ell+c})$.
In particular, to compute the function $f_{t/3}$ using the recurrence 
thus takes $O(n^{st/3+c})$ time.

We will now apply Theorem~\ref{thm:main}. Let us take
$q=st/3$ and compute $\Delta(f_{t/3},f_{t/3},f_{t/3})$ using
the algorithm in Theorem~\ref{thm:main}. This will take
$O\bigl(n^{3q(\frac{1}{2}-\tau)+c}\bigl)=
O\bigl(n^{(\frac{1}{2}-\tau)st+c}\bigl)$, which is exactly
the claimed running time. 

To complete the proof of Theorem~\ref{thm:packings}, we observe 
that $\Delta(f_{t/3},f_{t/3},f_{t/3})$
is exactly the number of $t$-tuples of pairwise disjoint subsets
from $\mathcal{F}$, multiplied by the multinomial coefficient
$\binom{t}{t/3,t/3,t/3}$. Indeed, each $t$-tuple 
$(S_1,S_2,\ldots,S_t)\in\mathcal{F}^t$ of pairwise disjoint sets 
is counted in $\Delta(f_{t/3},f_{t/3},f_{t/3})$ exactly 
$\binom{t}{t/3,t/3,t/3}$ times, once for each possible way of 
partitioning the index set $\{1,2,\ldots,t\}$ into a three-tuple 
$(I,J,K)$ with $|I|=|J|=|K|=t/3$ such that 
$A=\cup_{\ell\in I}S_\ell$, $B=\cup_{\ell\in J}S_\ell$, 
and $C=\cup_{\ell\in K}S_\ell$; cf.~\eqref{eq:main}. 
Thus, $\binom{t}{t/3,t/3,t/3}^{-1}\Delta(f_{t/3},f_{t/3},f_{t/3})$
is the count we want. This completes the proof of Theorem~\ref{thm:packings}.

\section{On the hardness of counting in disjoint parts}
\label{sect:lower-bounds}

This section presents two results that provide partial
justification why there was an apparent barrier 
at ``meet-in-the-middle time'' for counting in disjoint parts. 

First, in the case of two disjoint parts, the problem appears to
contain no algebraic dependency that one could expoit towards
faster algorithms beyond those already presented in 
Bj\"orklund {\em et al.}~\cite{BHKK08,BHKK09}. Indeed, we can
provide some support towards this intuition by recalling that
the associated 2-tensor has full rank over the rationals
(Lemma~\ref{lem:disjmat}).

Second, in the case of three disjoint parts, we have already
witnessed (in the proof of Theorem~\ref{thm:main}) that the
associated 3-tensor does not have full rank, in essence
because the 3-tensor for matrix multiplication does
not have full rank. This prompts the question whether it was
{\em necessary} to rely on fast matrix multiplication 
to break the barrier. We can provide some support towards an 
affirmative answer by showing that any {\em trilinear} algorithm 
for $\Delta$ that breaks the barrier implies a nontrivial algorithm 
for matrix multiplication (Theorem~\ref{thm:omega-tau}).

\subsection{Disjoint pairs}
It will be convenient to use Iverson's bracket notation;
for a logical proposition $P$ we have $[P]=1$ if $P$ is true and $[P]=0$ if
$P$ is false. The $(n,k)$-{\em disjointness matrix} is the 
$\binom{n}{k}\times\binom{n}{k}$ matrix with entries $[A\cap B=\emptyset]$
for all $A,B\in\binom{U}{k}$, $0\leq k\leq n$.
The following lemma is well known.

\begin{Lem}
\label{lem:disjmat}
The $(n,k)$-disjointness matrix has full rank over the rationals.
\end{Lem}

\begin{proof}
It suffices to show that the matrix is invertible. Observe that
\[
\sum_{B\in\binom{U}{k}} [A\cap B=\emptyset] z_{|B\cap C|} = [A=C]
\]
holds for all $A,C\in\binom{U}{k}$ when the values $z_j$ 
for $j=0,1,\ldots,k$ are the solutions to the $(k+1)\times(k+1)$ 
linear system with equations
\[
\sum_{j=0}^k \binom{i}{j}\binom{n-k-i}{k-j} z_j = [i=0],\qquad i=0,1,\ldots,k\,.
\]
The coefficient matrix of this system is a lower triangular matrix
with nonzero entries on the diagonal. Thus, the system is invertible.
\end{proof}

\subsection{Fast disjoint triples implies fast matrix multiplication}

Let us prove Theorem~\ref{thm:omega-tau}.
For every $q$ 
we have by assumption a trilinear algorithm of rank $r$ for 
computing $\Delta(f,g,h)$ for inputs 
$f,g,h:\binom{U}{q}\rightarrow\mathbb{Z}$
over an $n$-element universe $U$.
That is, for every $q$ and $n$ there exist coefficients 
$\lambda_s,\alpha_A,\beta_B,\gamma_C\in\mathbb{Z}$ 
such that
\[
\Delta(f,g,h)=\sum_{s=1}^r \lambda_sF_sG_sH_s
\]
with
\[
F_s=\sum_{A\in\binom{U}{q}}\alpha_Af(A),\quad
G_s=\sum_{B\in\binom{U}{q}}\beta_Bg(B),\quad
H_s=\sum_{C\in\binom{U}{q}}\gamma_Ch(C)\,.
\]

Let us now derive from these trilinear algorithms
bilinear algorithms for matrix multiplication.
Let us recall from \eqref{eq:xj} the indeterminates $x_j$ 
with $j\equiv q\pmod 2$. In particular, a trilinear algorithm
for $\Delta=\Delta(f,g,h)$ is precisely a trilinear algorithm for $x_{3q}$.

Our proof strategy is to derive a bilinear algorithm for matrix
multiplication from a trilinear algorithm for $x_0$, and then
derive a trilinear algorithm for $x_0$ from the first family 
of equations (\S\ref{sect:system}) and the low-rank algorithm 
for $\Delta$. Finally, we use recursion on the bilinear algorithm
to conclude that $\omega\leq 3-\tau$.

Let $P$ and $Q$ be matrices of size $N\times N$. 
Without loss of generality we may assume that $q$ is even and 
that $n$ is divisible by 3, with $N=\binom{n/3}{q/2}=\Omega(n^{q/2})$. 

Partition $U$ into three disjoint sets $U_1,U_2,U_3$ of size $n/3$.
Define the function $f_P:\binom{U}{q}\rightarrow R$ for 
$K_1\in\binom{U_1}{q/2}$ and $K_2\in\binom{U_2}{q/2}$
by setting $f_P(K_1\cup K_2)=P(K_1,K_2)$, and let $f_P$ vanish
elsewhere.
Similarly, define the function $g_Q:\binom{U}{q}\rightarrow R$ for 
$K_2\in\binom{U_2}{q/2}$ and $K_3\in\binom{U_3}{q/2}$
by setting $g_Q(K_2\cup K_3)=Q(K_2,K_3)$, and let $g_Q$ vanish
elsewhere.

Assume that we have a trilinear algorithm of rank $r$ over the integers 
that computes $x_0$. We claim that we can transform this trilinear 
algorithm into a bilinear algorithm of rank $r$ that multiplies $P$ and $Q$. 
Indeed (cf.~\cite[\S9]{Pan1984}), we can fix $f=f_P$ and $g=g_Q$ in 
the trilinear equations, and for each $K_1\in\binom{U_1}{q/2}$ and 
$K_3\in\binom{U_3}{q/2}$ solve for the indeterminate $h(C)$ with 
$C=K_1\cup K_3$ to determine the $(K_1,K_3)$-entry of the product 
matrix $PQ$.

Recalling the first family of linear equations from \S\ref{sect:system},
from the proof of Lemma~\ref{lem:yi} and the structure of 
equations \eqref{eq:yi-tp} and \eqref{eq:tpfgh} we can observe 
that the right-hand side $y_i$ of the first family has trilinear
rank at most 
\begin{equation}
\label{eq:yi-trilinear}
O\bigl(n^{(3/2-\gamma)q+c}\bigr)
\end{equation}
whenever $0\leq i\leq (3/2-\gamma)q$.
Thus, using the first family we can show that the trilinear rank 
of $x_0$ is small by showing that the indeterminates $x_j$ for 
large values of $j$ have low trilinear rank.

Towards this end, 
let us solve for $x_{j}$ with $j\geq 3q-2d$ using the trilinear algorithm 
for $\Delta$ as a subroutine. 
That is, we iterate over all possible choices
for the intersecting part of a triple $(A,B,C)$ with $|A\oplus B\oplus C|=j$,
and for each such choice use $\Delta$ to sum over the disjoint parts 
to accumulate $x_j$.
Let us now make this more precise.
For sets $X,Y\subseteq U$ let us abbreviate $XY=X\cap Y$ and
$\bar X=U\setminus X$. 
For $A,B,C\in\binom{U}{q}$, the {\em intersecting part}
of the triple $(A,B,C)$ is the tuple
of disjoint sets 
\begin{equation}
\label{eq:int-def}
I(A,B,C)=(AB\bar C,A\bar BC,\bar ABC,ABC)\,.
\end{equation}
The {\em size} of $I(A,B,C)$ is 
\begin{equation}
\label{eq:int-size}
\begin{split}
|I(A,B,C)|&=
|AB\bar C|
+|A\bar BC|
+|\bar ABC|
+|ABC|\\
&=|AB|+|AC|+|BC|-2|ABC|\,.
\end{split}
\end{equation}
We have that $j=|A\oplus B\oplus C|$ is large if and only if 
$|I(A,B,C)|$ is small. In more precise terms, from \eqref{eq:int-size}
we have
\begin{equation}
\label{eq:j-int}
\begin{split}
j&=|A\oplus B\oplus C|\\
&=|A|+|B|+|C|
-2|AB|-2|AC|-2|BC|
+4|ABC|\\
&=3q-2|I(A,B,C)|\,.
\end{split}
\end{equation}
The {\em disjoint part} of $(A,B,C)$ is the tuple 
of disjoint sets
\begin{equation}
\label{eq:disj-def}
D(A,B,C)=(A\bar B\bar C,\bar AB\bar C,\bar A\bar BC)\,.
\end{equation}
It is immediate that $I(A,B,C)$ and $D(A,B,C)$ together
uniquely determine the triple $(A,B,C)$.

Now consider an arbitrary triple $(A,B,C)$ with 
$j=|A\oplus B\oplus C|\geq 3q-2d$. We know that there is a 
unique quadruple $(I_1,I_2,I_3,I_4)$ of disjoint sets 
with $I(A,B,C)=(I_1,I_2,I_3,I_4)$. Furthermore, from \eqref{eq:j-int}
and $j\geq 3q-2d$ it follows that $|I_1|+|I_2|+|I_3|+|I_4|\leq d$.
Thus, it suffices to iterate over at most $4^d(d+1)n^{d}$ quadruples 
$(I_1,I_2,I_3,I_4)$ to match the intersecting part of $(A,B,C)$.

So suppose we have fixed $(I_1,I_2,I_3,I_4)$ with 
$|I_1|+|I_2|+|I_3|+|I_4|\leq d$, and let $I=I_1\cup I_2\cup I_3\cup I_4$.
We can now capture each $(A,B,C)$ with $I(A,B,C)=(I_1,I_2,I_3,I_4)$ 
by means of the disjoint part $D(A,B,C)$. That is, there exists a unique 
disjoint triple $(D_1,D_2,D_3)$ of subsets $D_1,D_2,D_3\subseteq\bar I$ 
such that $D(A,B,C)=(D_1,D_2,D_3)$.
Furthermore, from \eqref{eq:int-def} and \eqref{eq:disj-def} it is 
immediate that
\[
\begin{split}
|D_1|&=q-|I_1|-|I_2|-|I_4|\,,\\
|D_2|&=q-|I_1|-|I_3|-|I_4|\,,\\
|D_3|&=q-|I_2|-|I_3|-|I_4|\,.
\end{split}
\]

Since the sets $D_1,D_2,D_3$ in general have size different from
$q$, let us introduce the following padding to obtain a valid input
for $\Delta$. Let $E_1,E_2,E_3$ be three sets that are disjoint from each
other and $U$, with 
$|E_1|=|I_1|+|I_2|+|I_4|$,
$|E_2|=|I_1|+|I_3|+|I_4|$, and
$|E_3|=|I_2|+|I_3|+|I_4|$.
Let $U'=U\cup E_1\cup E_2\cup E_3$ and $n'=n+|E_1|+|E_2|+|E_3|\leq n+3q$.

We are now ready to construct an input for $\Delta$.
Define three functions $f',g',h':\binom{U'}{q}\rightarrow R$ for all 
$A',B',C'\in\binom{U'}{q}$ by 
\[
\begin{split}
f'(A')&=
\begin{cases}
f\bigl((A'\cap U)\cup I_1\cup I_2\cup I_4\bigr) & \text{if $A'\cap U\subseteq\bar I$ and $A'\cap (U'\setminus U)=E_1$;}\\
0           & \text{otherwise.}
\end{cases}\\
g'(B')&=
\begin{cases}
g\bigl((B'\cap U)\cup I_1\cup I_3\cup I_4\bigr) & \text{if $B'\cap U\subseteq\bar I$ and $B'\cap (U'\setminus U)=E_2$;}\\
0           & \text{otherwise.}
\end{cases}\\
g'(C')&=
\begin{cases}
g\bigl((C'\cap U)\cup I_2\cup I_3\cup I_4\bigr) & \text{if $C'\cap U\subseteq\bar I$ and $C'\cap (U'\setminus U)=E_3$;}\\
0           & \text{otherwise.}
\end{cases}
\end{split}
\]
By construction, we now have that to every triple $(A,B,C)$ with 
$I(A,B,C)=(I_1,I_2,I_3,I_4)$ there corresponds a unique disjoint triple 
$(A',B',C')$ such that 
\[
f(A)g(B)h(C)=f'(A')g'(B')h'(C')\,. 
\]
Taking the sum over all $(I_1,I_2,I_3,I_4)$, we have
\[
x_j = 
\sum_{\substack{(I_1,I_2,I_3,I_4)\\|I_1|+|I_2|+|I_3|+|I_4|=(3q-j)/2}}\Delta(f',g',h')\,.
\]
Thus, using the trilinear algorithm for $\Delta$ 
for each choice of $(I_1,I_2,I_3,I_4)$, we can compute 
$x_j$ for all $3q-2d\leq j\leq 3q$ with a trilinear algorithm of rank
\begin{equation}
\label{eq:xj-trilinear}
O\bigl(4^{d}(d+1)n^{d}(n+3q)^{3q(1/2-\tau)+c}\bigr)=O(n^{d+3q(1/2-\tau)+c})\,.
\end{equation}
Take $d=\lceil\gamma q\rceil$ so that together with equations 
from the first family \eqref{eq:yi-trilinear} we have enough equations 
to solve for $x_0$. It remains to select $\gamma$ so that 
\eqref{eq:yi-trilinear} and \eqref{eq:xj-trilinear} are balanced.
We have that \eqref{eq:yi-trilinear} and 
\eqref{eq:xj-trilinear} are balanced when
\[
(3/2-\gamma)q=\gamma q+3q(1/2-\tau)\,.
\]
That is, when $\gamma = 3\tau/2$. We thus have a trilinear algorithm 
for $x_0$ that has rank $r=O\bigl(n^{(3-\tau)q/2+c}\bigr)$ for a 
constant $c$ independent of $n$ and $q$.
That is, we have a bilinear algorithm of rank 
$r=O\bigl(n^{(3-\tau)q/2+c}\bigr)$ to multiply two $N\times N$ matrices
with $N=\Omega(n^{q/2})$. For any constant $\epsilon>0$ we can now
obtain, by selecting a large enough $q$, a bilinear algorithm of
rank $r=O(N^{3-\tau+\epsilon})$ to multiply two $N\times N$ matrices.
Taking a large enough $N$ and using recursion 
(cf.~\cite[Theorem~2.1]{Pan1984}), we conclude that 
$\omega\leq 3-\tau+\epsilon$. Since $\epsilon$ was arbitrary,
$\omega\leq 3-\tau$.

\section*{Acknowledgments}

A preliminary conference abstract of this work has appeared as A. Bj\"orklund, P. Kaski, and \L. Kowalik, ``Counting thin subgraphs via packings faster than meet-in-the-middle time,'' Proceedings of the 25th ACM-SIAM Symposium on Discrete Algorithms (SODA 2014, Portland, Oregon, January 5--7, 2014), SIAM, Philadelphia, PA, 2014, pp.~594--603.
This research was supported in part by the 
Swedish Research Council, Grant VR 2012-4730 (A.B.),
the Academy of Finland, Grants 252083, 256287, and 283437 (P.K.), and 
by t	he National Science Centre of Poland, Grants N206 567140 and 2013/09/B/ST6/03136 (\L.K.).


\bibliographystyle{abbrv}


\end{document}